\newtheorem{mydef}{Definition}
\newtheorem{theorem}{Theorem}
\theoremstyle{definition}
\newtheorem{corollary}{Corollary}[theorem]
\newtheorem{asu}{Assumption}
\newtheorem{remark}{Remark}
\title{Differentially Private Collaborative Intrusion Detection Systems For VANETs}
\author{
  Tao Zhang\\
  %\thanks{Use footnote for providing further
    %information about author (webpage, alternative
    %address)---\emph{not} for acknowledging funding agencies.} \\
  Department of Electrical and Computer Engineering\\
  New York University\\
  \texttt{tz636@nyu.edu} \\
  %% examples of more authors
   \And
 Quanyan Zhu\\
  Department of Electrical and Computer Engineering\\
 New York University\\
  \texttt{qz494@nyu.edu} \\
  %% \AND
  %% Coauthor \\
  %% Affiliation \\
  %% Address \\
  %% \texttt{email} \\
  %% \And
  %% Coauthor \\
  %% Affiliation \\
  %% Address \\
  %% \texttt{email} \\
  %% \And
  %% Coauthor \\
  %% Affiliation \\
  %% Address \\
  %% \texttt{email} \\
}
\begin{document}
\maketitle

\begin{abstract}
Vehicular ad hoc network (VANET) is an enabling technology in modern transportation systems for providing safety and valuable information, and yet vulnerable to a number of attacks from passive eavesdropping to active interfering. 
Intrusion detection systems (IDSs) are important devices that can mitigate the threats by detecting malicious behaviors. Furthermore, the collaborations among vehicles in VANETs can improve the detection accuracy by communicating their experiences between nodes. To this end, distributed machine learning is a suitable framework for the design of scalable and implementable collaborative detection algorithms over VANETs. One fundamental barrier to collaborative learning is the privacy concern as nodes exchange data among them. % the data exchange between nodes over the network can impose serious privacy concerns. 
A malicious node can obtain sensitive information of other nodes by inferring from the observed data. In this paper, we propose a privacy-preserving machine-learning based collaborative IDS (PML-CIDS) for VANETs.
The proposed algorithm employs the alternating direction method of multipliers (ADMM) to a class of empirical risk minimization (ERM) problems and trains a classifier to detect the intrusions in the VANETs. We use the differential privacy to capture the privacy notation of the PML-CIDS and propose a method of dual variable perturbation to provide dynamic differential privacy. We analyze theoretical performance and characterize the fundamental tradeoff between the security and privacy of the PML-CIDS. We also conduct numerical experiments using the NSL-KDD dataset to corroborate the results on the detection accuracy, security-privacy tradeoffs, and design.

\end{abstract}

% keywords can be removed
\keywords{Privacy \and Differential privacy\and Cybersecurity \and VANET \and Intrusion detection}

\section{Introduction}

With a growing number of vehicles on road and the rapid development of autonomous vehicles, road safety becomes an increasingly important issue.
Vehicular ad hoc network (VANET) provides a communication system that enables the dissemination of safety-related information, traffic management, navigation, and road services.
%
% that can provide safety-related information, traffic management, navigation, and entertainment services. 
%
However, it is known that VANETs are vulnerable to a number of attacks from passive eavesdropping to active interfering \cite{pathan2016security}.
For example, an attacker can eavesdrop and log the messages of other vehicles, and replay them to access specific resources such as toll services.
An attacker can intrude a specific vehicle, impersonate its identity, and send out false warnings that can disrupt the highway traffic \cite{pathan2016security}.% Other examples include denial of service, hardware tampering, privacy violation, message delay or suppression (e.g., in the case of accidents, safety message can be delayed and suppressed, which can cause immense damage.). Maintaining a high level of security in the VANET is of paramount importance to prevent a number of cyber attacks that lead to major problems and serious dangers to life and property.

Intrusion detection plays an important role in mitigating the threat of VANETs by using \textit{signature}-based and/or \textit{anomaly}-based approaches to detect adversarial behaviors  \cite{zhang2003secure}.  %is an essential part of a security mechanism in the MANETs \cite{zhang2003secure}. 
%Intrusion activities include any use of a system that exceeds authentication limits, and any attempt to access more than it is allowed. 
%
%Intrusion detection can be categorized into \textit{signature}-based, \textit{anomaly}-based, and hybrid ones. Signature-based approaches are designed to detect known attacks according to the signature database. 
%
%IDSs gain a high level of accuracy in detecting known attacks but often fail when unknown attacks occur \cite{buczak2015survey}.% Anomaly-based techniques create models of normal behaviors and detect activities as an anomaly when there are sufficient deviations from the normal behaviors. The advantages of anomaly-based techniques include detecting unknown attacks such as zero-day attacks, and flexibility of profiles of normal behaviors \cite{buczak2015survey}. However, the main disadvantage of anomaly-based detection is the high false positive rate. Hybrid approaches combine the signature-based and the anomaly-based to improve the detection rates of known attacks while decreasing the false positive rate for unknown attacks. 
%
%Many architectures of IDSs have been proposed in the field of MANET (see related work in Section \ref{related}), including stand-alone IDSs, collaborative IDSs (CIDSs), hierarchical IDSs, and mobile-agent-based IDSs \cite{anantvalee2007survey}.
Among many architectures of IDSs, the collaborative IDSs (CIDSs) have been proposed to enable the sharing of detection knowledge about known and unknown attacks and increase detection accuracy \cite{anantvalee2007survey, zhu2012guidex, fung2010bayesian}. %without increasing the complexity of the local IDS as compared to the traditional single host-based ones.
Distributed machine learning algorithms provide an appropriate framework for CIDSs to classify adversarial behaviors using local datasets and share knowledge to increase the detection accuracy.% In recent years, machine learning has started to play a significant role in IDSs. However, it is still challenging to apply machine learning techniques for IDSs \cite{buczak2015survey}. 
%The main challenges include the update frequency of the model and the availability of labeled training dataset. Generally, the IDS models are trained daily \cite{bilge2011exposure}, by the user request \cite{jemili2007framework}, or each time a new intrusion is identified \cite{hansen2007genetic}. Frequent updates of the model can enhance the security level of the VANET, which make the training time of the model important. Therefore, fast learning techniques are necessary for maintaining the high-level security. Besides the difficulty of obtaining labeled training data, the large amount of data required also burdens the capacity of the IDSs. 

In this paper, we consider the network-level intrusion attacks on computer system \cite{raiyn2014survey, hoque2014network} and take advantage of the collaborative nature of the VANETs and design a system architecture of a distributed machine-learning based CIDS over a VANET. The CIDS enables each vehicle to utilize the knowledge of the labeled training data of other vehicles; thus, it boosts the training data size for each vehicle without actually burdening the storage capacity of each vehicle. Also, the laborious task of collecting labeled data can be distributed to all the vehicles in a VANET, thereby reducing the workload of each vehicle. Moreover, the CIDS enables the vehicles to share knowledge of each other without directly exchanging the training data. In addition, the CIDS provides the scalability of the training data processing and improves the quality of decision-making, while reducing the computational cost. The \textit{alternating direction method of multipliers} (ADMM) \cite{boyd2011distributed} is one suitable approach to decentralizing the machine learning problem over a network that allows nodes over the network to share their classification results and yields the optimal classifier achieved under the centralized learning. Despite the distributed feature of the learning algorithm, the data communications between different vehicles can create serious privacy concerns of the training data in each vehicle when %when an adversary aims to infer sensitive information of private datasets of other vehicles.}%, even though the direct data sharing has been circumvented by the ADMM mechanism.} 
%
%for example, the leakage of the data about the position of the vehicle could be utilized by third parties to track the movement of that vehicle, which is a violation of privacy; another example might be that the information included in the training data could be used to profile the driver's behavior and traffic pattern.
%
an adversary can observe the outcome of the learning and extract the sensitive information of the training data of each vehicle. The adversary can either be a vehicle of the VANET which observes its neighboring vehicles or malicious outsider who can observe the outputs of learning.

The lack of privacy protection mechanism often creates barriers for information sharing and disincentives for nodes to achieve collaboration.
Therefore, a privacy-preserving mechanism is important to protect the training data privacy over the network and achieve an effective CIDS. Differential privacy proposed in \cite{dwork2006calibrating} has been a well-defined concept that can provide a strong privacy guarantee by which a change of any single entry of the dataset can only slightly change the distribution of the responses of the dataset.

Therefore, this work proposes a privacy-preserving machine-learning based collaborative IDS (PML-CIDS) for the VANET. We first employ ADMM to construct a distributed empirical risk minimization (ERM) problem over a VANET so that a classifier can be trained in a decentralized fashion to detect whether an activity is normal or attack.
We extend the differential privacy to \textit{dynamic differential privacy} to capture the privacy notation in the distributed machine learning of the CIDS, and propose a privacy-preserving approach, \textit{dual variable perturbation} (DVP). We also investigate the performance of the DVP and characterize the fundamental tradeoff between security and privacy of the PML-CIDS by formulating convex optimization problems and conduct numerical experiments based on the NSL-KDD dataset to demonstrate the optimal design of the privacy mechanism. The main contributions of this paper are summarized as follows:
%
%We capture the distributed and collaborative nature of the VANET and propose a mechanism of an intrusion detection system (IDS) using distributed collaborative-based supervised machine learning in addition to the hybrid detection approach. Machine learning techniques have been utilized in intrusion detection methods for automating the detection \cite{upadhyaya2013hybrid, wankhade2013overview}. 
%
%In this research, we employ the alternating direction method of multipliers (ADMM) to construct a collaborative learning over a VANET, and apply a class of ADMM-based empirical risk minimization (ERM) problems as methods of learning a classifier that can predict whether an activity is normal or an attack. 
%
\begin{itemize}
\item[(i)] We propose a machine-learning-based CIDS architecture to enable the collaborative information exchange and knowledge sharing in VANETs.
\item[(ii)] We use ADMM to capture the distributed nature of a VANET and construct a collaborative learning over a VANET based on a regularized ERM algorithm.
\item[(iii)] We develop the DVP method to perturb the dual variables before minimizing the augmented Lagrange function at each ADMM iteration. The DVP is shown to guarantee dynamic differential privacy in the collaborative learning of the CIDS for a VANET. 
\item[(iv)] We investigate the theoretical performance of the DVP, which is measured by the minimum training data size required to achieve a low error.
\item[(v)] We provide a design principle to find the optimal value of the privacy parameter by solving an optimization problem to manage the tradeoff between security and privacy of a VANET.
\end{itemize}

\begin{figure*}[htpb]
\includegraphics[scale=0.32]{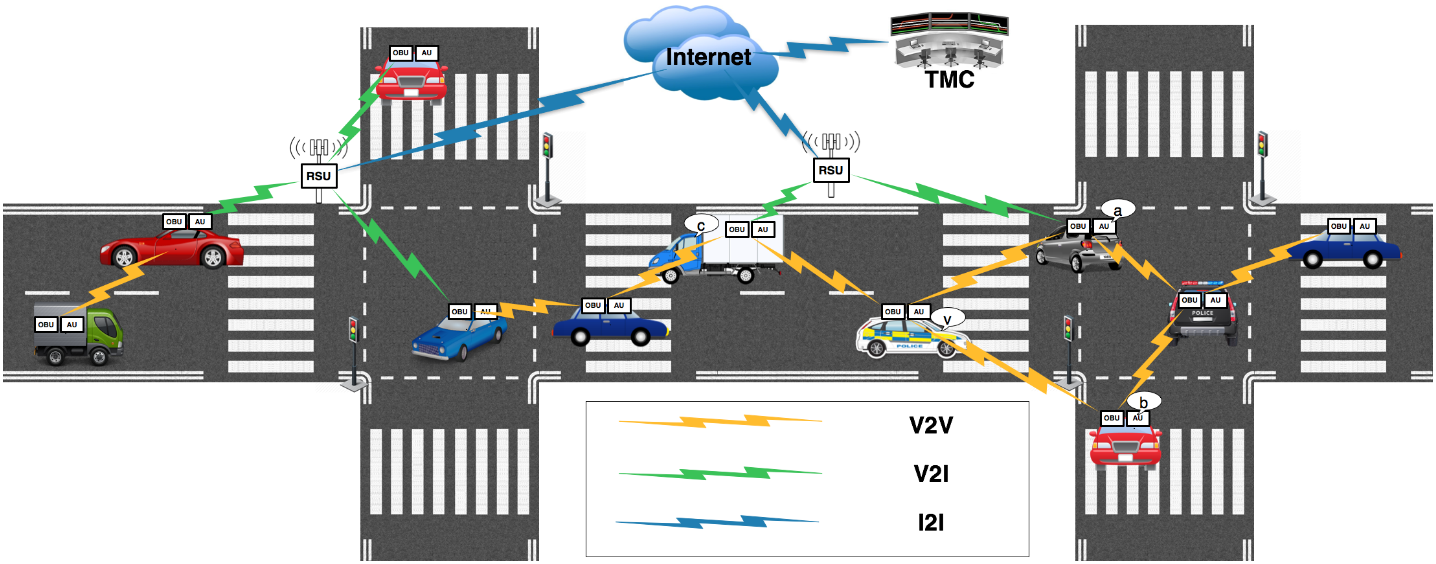}
\centering
\caption{A VANET scenario: TMC: traffic management center; V2V: vehicle-to-vehicle communications; V2I: vehicle-to-infrastructure communications; I2I: infrastructure-to-infrastructure communications. Each vehicle is equipped with an OBU and an AU. } %\vspace{-6mm}
\label{VANET1}
\end{figure*}

\subsection{Related Work}\label{related}
Many works have studied various architectures of intrusion detection systems that are well-suited to MANET \cite{anantvalee2007survey}. Most architectures for MANET can be classified into three categories. The first is the distributed and cooperative IDS, which captures the distributed nature of MANET that has the potential for constructing cooperations over the network. For example, Zhang and Lee in \cite{zhang2003intrusion} have utilized this nature of MANET and constructed a model for a distributed and cooperative IDS. Also, Albers et al. have proposed a collaborative IDS based on local IDS by using mobile agents in \cite{albers2002security}. The local IDS is implemented on each node of the MANET for local node-based security concerns, which can be extended to deal with the global security issues by establishing a collaboration among local IDSs over the MANET. The second category is hierarchical IDS model that extends the distributed and cooperative architectures. In \cite{sterne2005general}, Sterne et al. have designed a dynamic hierarchical IDS using multilevel clustering. The third architecture uses the concept of mobile agents, which can move through the large network. In this type of framework, each mobile agent is assigned to work on a single specific task; then one or multiple mobile agents are distributed into each node in the MANET. 
Previous research includes the work of Kachirski and Guha in \cite{kachirski2003effective} that has proposed distributed IDSs using multiple sensors based on mobile agent technology; and thus, the workload is distributed by separating functional tasks and assigning the tasks to different agents.

Machine learning and data mining for IDSs have also been studied in the literature.
These techniques enable the IDS to continuously learn attacks and their behaviors, enhance the knowledge of the security system, make connections between suspicious events, and predict the occurrence of an attack.
Researchers have studied the unsupervised learning such as the technique of clustering, which is an unsupervised pattern discovery method, in IDSs. There are several approaches for clustering the unlabeled data; for example, Blowers and Williams \cite{blowers2014machine} have applied a density-based spatial clustering of applications with noise clustering algorithm to group normal versus anomalous network packets. Other clustering based work includes hierarchical clustering \cite{horng2011novel} and K-means \cite{muda2011intrusion}. There is also literature on the IDS with supervised learning such as support vector machine \cite{buczak2015survey}. For example, Wagner et al. \cite{wagner2011machine} have applied one-class SVM classifier and used a new window kernel to find the anomaly based on time position of the data. Other methods using supervised learning include decision trees \cite{kruegel2003using, bilge2011exposure, bilge2014exposure}, artificial neural networks \cite{cannady1998artificial, lippmann2000improving}, and sequential data aggregation \cite{fung2016facid, zhu2010distributed}. There also have been works on intrusion-prediction based detection using non-machine-learning techniques \cite{zhu2009dynamic, zhu2011indices}.
For example, Nidhal et al. \cite{mejri2016new} have designed a game-theoretic intrusion detection approach for VANET. The game-based model can predict a possible future denial-of-service attack on the monitored nodes.

In the field of differential privacy, there is a number of works on applying differential privacy to machine learning \cite{kasiviswanathan2011can, bassily2014private, han2014differentially,  zhang2017dynamic, zhang2016dual}. Kasiviswanathan et al. \cite{kasiviswanathan2011can}, for instance, have driven a general method for probabilistically approximately correct learning. 
A body of literature has studied the tradeoff between the privacy and the performance of machine learning while exploring the theory of differential privacy (e.g., \cite{dwork2006calibrating, mcsherry2007mechanism, blum2005practical}). Also, an increasing number of researchers focus on the distributed differential privacy. 
Eigner and Maffei have developed the framework for the automated verification of distributed differential privacy in \cite{eigner2013differential} to enforce the distributed differential privacy in cryptographic protocol implementations. 
Han et al. \cite{Han2016Differentially} have proposed a differentially private algorithm to solve a distributed constrained optimization based on distributed projected gradient descent to protect the privacy of the constraint set.
Hale et al. \cite{Hale2015Differentially} have used a cloud computer to perform differentially private computations so that the broadcasts of the results to each agent over the network do not leak the private state of each agent.
In this paper, we have developed a collaborative IDS using distributed machine learning and resolve the barrier of privacy issues by proposing the concept of dynamic differential privacy to protect the privacy of the training dataset used in the learning.

\subsection{Organization of the Paper}
The rest of the paper is structured as follows. 
Section \ref{Se2} describes the PML-CIDS architecture. Section \ref{Se4} presents the model of the collaborative learning over a VANET for IDS. The ADMM approach is used to decentralize a centralized ERM problem that models the collaborative learning in the VANET. We also describe the privacy concerns associated with the ADMM-based collaborative learning and define the dynamic differential privacy. Section \ref{Se5} proposes the DVP algorithm to provide dynamic differential privacy. Then, we study the performance of the DVP algorithm in Section \ref{Se6}. Section \ref{Se7} shows numerical experiments to corroborate the theoretical results and the optimal design principle to a tradeoff between security and privacy. Finally, Section \ref{Se8} presents the concluding remarks and future research directions.

\section{PML-CIDS Model} \label{Se2}

In this section, we describe the architecture of the proposed PML-CIDS which includes multiple building blocks for VANETs. 
%
%Before introducing our PML-CIDS model, we first describe the general architecture of VANET. 
Illustrated in Figure \ref{VANET1}, a general VANET consists of on-board units (OBU), application units (AU), and roadside units (RSU). The communication between OBUs (vehicle-to-vehicle), or between an OBU and an RSU (vehicle-to-infrastructure) is based on wireless access in-vehicle environment (WAVE) \cite{anantvalee2007survey}. The RSUs can also connect to other infrastructures such as other RSUs and traffic management center, and the communications between them (infrastructure-to-infrastructure) are through other wireless technology. Each vehicle is equipped with an OBU and one or multiple AUs. It also has a set of sensors to collect information and use the OBU to exchange information with other OBUs or RSUs. Details about the three main components of the VANET architecture are presented in the Appendix \ref{Apdx_1} for interested readers.

Each vehicle is equipped with one local PML-CIDS agent as shown in Figure \ref{DDPCIDS} to monitor its local activities including the ones in the AU and the communications via the OBU. Conceptually, the collaborative system consists of three main components, namely, pre-processing engine, a local detection engine, and privacy-preserving collaborative machine learning (P-CML) engine.
The logical flow of a PML-CIDS is illustrated in Algorithm \ref{AlgIDS}.
The pre-processing engine gathers and pre-processes the real-time VANET system data that describe the system activities in a vehicle. The pre-processed system data is then analyzed by the local detection engine using classification techniques. If the user of the vehicle requires the current classifier to be updated, then the P-CML engine is initiated. The local detection engine uses the newly retained classifier to analyze the system data. Otherwise, the current classifier is used in the classification of intrusions. If any intrusion is classified, the alarm is triggered. Each component of the PML-CIDS is elaborated further in Appendix \ref{Apdx_2}. One essential component of the CIDS is the P-CML engine which is composed of the collaborative communication (CC) engine, distributed local learning (DLL), and privacy-preserving (PP) mechanism. The details of these building blocks will be described in detail Section \ref{Se4}.

%

 %%%%%%%%%%%%%%%%%%%%%%%%%%%%%%%%%%
% Algorithm 1
% Distributed ERM
\begin{algorithm}[tb]
   \caption{PML-CIDS}
   \label{AlgIDS}
\begin{algorithmic}
\STATE {\bfseries Input:} Real-time VANET system data: Local audit data flow and activity logs.
\STATE {\bfseries Step 1:} The \textit{\textbf{pre-processing engine}} collects and pre-processes the real-time VANET system data, by numerical transformation, features selection, and data normalization.
\IF{The classifier needs update}
\STATE {\bfseries Step 2:} The \textit{\textbf{P-CML engine}} is initiated and local training dataset is loaded. And updated classifier is obtained.
\STATE {\bfseries Step 3:} The \textit{\textbf{local detection engine}} uses the newly updated classifier to analyze the real-time VANET system data. If any activities are classified as intrusions, the \textit{\textbf{local detection engine}} triggers the alarm.
\ELSE
\STATE {\bfseries Step 2:} The \textit{\textbf{local detection engine}} uses the current classifier to analyze the real-time VANET system data and triggers the alarm when any activities are classified as intrusions.
\ENDIF
\end{algorithmic}
\end{algorithm}

%%%%%%%%%%%% Figure
\begin{figure}[htpb]
\includegraphics[scale=0.20]{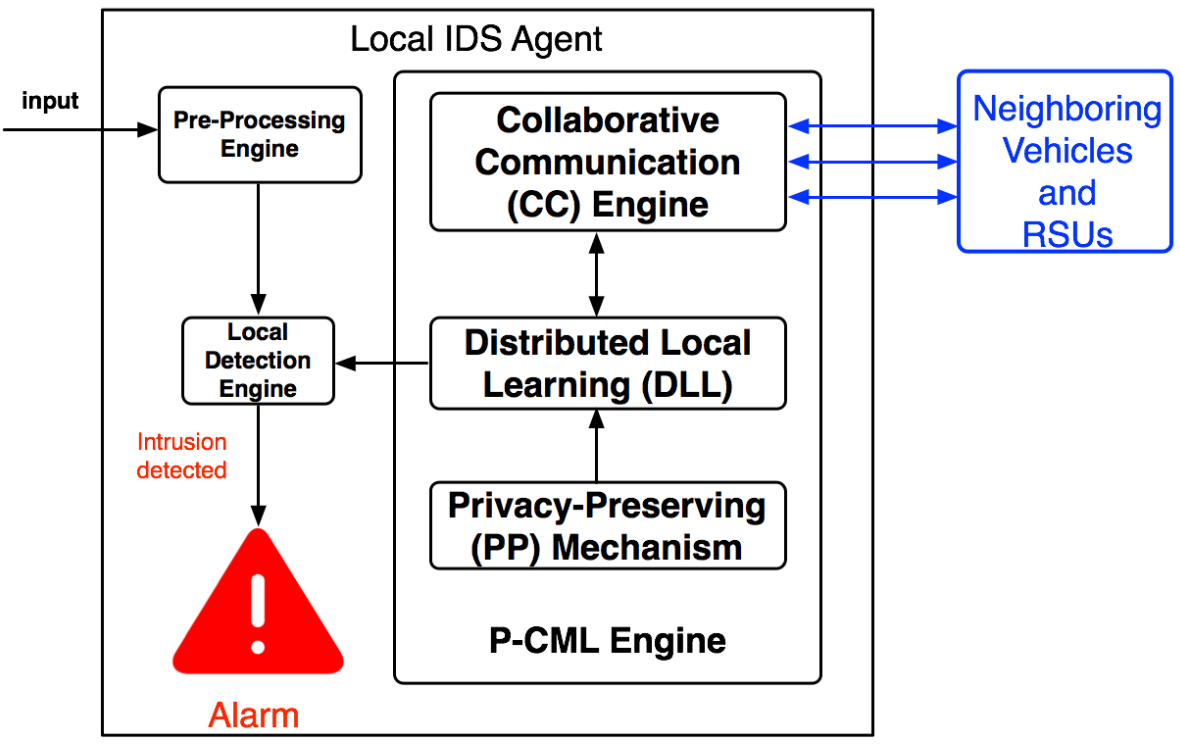}
\centering
\caption{Architecture of PML-CIDS: The Pre-processing engine collects and pre-processes the local audit data flow. The local detection engine then analyzes the pre-processed data using a classifier. If the user of the vehicle requires an update of the classifier, the P-CML engine is initiated. After collaborative learning, the updated classifier will be used in the intrusion detection.} \vspace{-3mm}
\label{DDPCIDS}
\end{figure}

\section{Dynamically Distributed Private Collaborative Learning} \label{Se4}
In this section, we describe the CC engine and the DLL of the P-CML engine in the PML-CIDS. We first model the machine learning by a centralized regularized empirical risk minimization (ERM) problem, which is then decentralized by the ADMM approach. The privacy concerns are then described, and a definition of \textit{dynamic differential privacy} is provided. 
In our model, the vehicles and infrastructures are treated equally except that the infrastructures are static and have more data processing capacity. Therefore, without loss of generality, in the rest of paper, we focus on the vehicle-to-vehicle communications.

% \textbf{\textcolor{red}{\begin{itemize}
% \item $\mathcal{P}\rightarrow \mathcal{V}$
% \item $B_p\rightarrow n_v$
% \item $Z_{c_1} \rightarrow Z_1$
% \item $C^R\rightarrow C_1$
% \item $w_{jp} \rightarrow s_{wv}$
% \item $j \rightarrow w$
% \item $Z_{c_2} \rightarrow Z_2$
% \item $c_1\rightarrow C_2$
% \item $\mu_v\rightarrow\beta_v$
% \item $\hat{C}(f_v)\rightarrow \hat{J}(f_v)$
% \item $\alpha_{acc}\rightarrow \mu$
% \item $\Delta^{non}(t)\rightarrow C^{non}(t)$
% \item $\beta_{non}\rightarrow C_3$
% \item $\beta_{dual}\rightarrow C_4$
% \item $\textbf{J}_{f}\rightarrow \textbf{K}_{f}$
% \item $h_j(\textbf{W})\rightarrow r_j(\textbf{W})$
% \item $f^p(t) \rightarrow f^v$
% \item $\Lambda \rightarrow \Omega$
% \end{itemize}}}

\begin{figure}[htpb]
\includegraphics[scale=0.4]{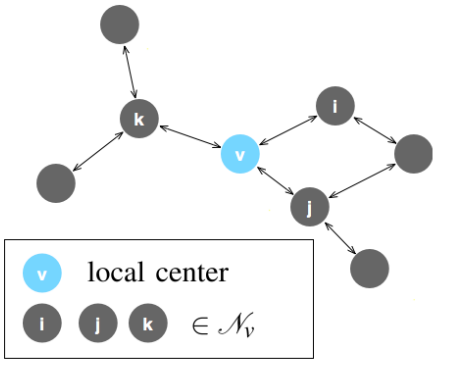}
\centering
\caption{Connected networks: The colored nodes represent the vehicles; $v$ is the local center of the one-hop neighborhood composed of $i$, $j$, and $k$; $v$ only communicates with $i$, $j$, and $k$.} \vspace{-6mm}
\label{VANETabs}
\end{figure}

\subsection{Distributed Learning over a VANET}
Consider a connected VANET, which consists of $P$ vehicles, described by an undirected graph $G(\mathcal{V}, \mathcal{E})$ as shown in Figure \ref{VANETabs} at time $t$, with the set of vehicles $\mathcal{V}=\{1,2,3,..., P\}$, and a set of edges $\mathcal{E}$ denoting the links between connected vehicles.
In general, the graph can change over time as the nodes move. Here we introduce the framework with a fixed topology, and it can be easily extended to dynamic regimes. A particular vehicle $v \in \mathcal{V}$ only exchanges information between its neighboring vehicle $w \in \mathcal{N}_v$, where $\mathcal{N}_v$ is the set of all neighboring vehicles of $v$.
Each vehicle stores a labeled training dataset $D_v = \{(x_{iv},y_{iv})\subset X \times Y : i = 0,1,\cdots,n_{v} \}$ of size $n_{v}$, where $x_{iv} \in X \subseteq \mathds{R}^d$ and $y_{iv} \in Y := \{-1, 1\}$ are the data vector and the corresponding label, respectively. The entire network therefore has a set of data  $\hat{D} = \bigcup_{v \in \mathcal{V}} D_{v}.$ 
The training dataset $D_v$ of $v\in\mathcal{V}$ contains data points describing the VANET system activities such as user and application activities and communication activities through the OBUs; each data point is labeled as an intrusion ($1$) or a normal activity ($-1$). Labeling is important, which is even true for the pure anomaly detection \cite{buczak2015survey}.
The labeled training dataset must have both the normal data and the intrusion data including the novel attack data.

The collaborative learning in our model should be distributed over a VANET without direct data sharing.
The alternating direction method of multipliers (ADMM) is a suitable approach for our model.
In this work, we focus on a class of distributed ADMM-based empirical risk minimization (ERM) as the supervised learning algorithm used in the collaborative learning. 
Each collaborative learning can be modeled as an optimization problem to find a classifier $f: X\rightarrow Y$ using all available data $\hat{D}$ that enable all vehicles in the ad hoc network to classify any input data $x'$ (i.e., data flow collected and pre-processed by pre-processing engine) to a label $y' \in \{-1,1\}$, where $-1$ and $1$ denote normal activities and intrusions, respectively. Before introducing the ADMM-based distributed learning, we first present the centralized optimization in the next subsection.

\subsubsection{Centralized Optimization}
Let $Z_1(f|\hat{D})$ be the centralized objective function of a regularized empirical risk minimization problem (C-ERM). Thus, the C-ERM problem can be defined as:
\begin{equation}\label{CRERM}  
\min_{f}\:\: Z_1(f|\hat{D}) := \frac{C_1}{n_v}\sum_{v=1}^{P}\sum_{i=1}^{n_{v}}\mathcal{\hat{L}}(y_{iv},\:\: f^T x_{iv})+\kappa R(f),
\end{equation}
where $C_1 \leq n_v$ is a regularization parameter and $\kappa>0$ is the parameter that controls the impact of the regularizer.
$\mathcal{\hat{L}}(y_{iv},\:\: f^T x_{iv}): \mathds{R}\times\mathds{R}^d\times\mathds{R}^d\rightarrow \mathds{R}$ is the \textit{loss function} that measures the quality of the trained classifier. In this work, we focus on the specific loss function $ \mathcal{\hat{L}}(y_{iv},\:\: f^T x_{iv})=\mathcal{L}(y_{iv} f^T x_{iv})$. The regularizer function $R(f)$ in (\ref{CRERM}) is used to prevent overfitting.
Suppose that $\hat{D}$ is available to the fusion center vehicle,  a global classifier $f:X\rightarrow Y$ is chosen by optimizing the C-ERM.

\subsubsection{Distributed Optimization}
To solve the problem by ADMM, we first decentralize the C-ERM problem by introducing the decision variables $\{f_{v}\}_{v=1}^{P}$; then, vehicle $v$ determines its own classifier $f_v$. We impose consensus constraints $f_1=f_2=...=f_P$ to guarantee the global consistency of the classifiers. Let $ \{s_{vw}\}$ be the auxiliary  variables to decouple $f_v$ of the vehicle $v$ from its neighbors $w\in \mathcal{N}_v$ in the VANET. 
Then, the consensus-based reformulation of C-ERM becomes 
\begin{equation}\label{equiCRERM}
\begin{split}
\min_{\{f_{v}\}_{v=1}^{P}}\:\: Z_2:=\frac{C_1}{n_v}\sum_{v=1}^{P}\sum_{i=1}^{n_v}\mathcal{L}(y_{iv} f_{v}^T x_{iv})+P\sum_{v=1}^{P}\kappa R(f_v),\\ 
\textrm{s.t. \ \ } f_{v} =s_{vw},\: s_{vw}=f_{v}, v = 1,...,P, w \in \mathcal{N}_v,
\end{split}
\end{equation}
where  $Z_2(\{f_v\}_{v\in\mathcal{V}}|\hat{D})$ is the reformulated objective as a function of $\{f_{v}\}_{v=1}^{P}$. According to Lemma $1$ in \cite{forero2010consensus}, if $\{f_{v}\}_{v=1}^{P}$ presents a feasible solution of (\ref{equiCRERM}) and the network is connected, then problems (\ref{CRERM}) and (\ref{equiCRERM}) are equivalent, i.e., $f = f_{v}$, for all $v = 1,...,P$, where $f$ is a feasible solution of C-ERM. Let $\rho = P\kappa$. Problem (\ref{equiCRERM}) can be solved in a distributed fashion using ADMM with each vehicle $v\in\mathcal{V}$ optimizing the following distributed regularized empirical risk minimization problem (D-ERM):
\begin{equation}\label{equiObjp}
Z_v(f_v|D_v) := \frac{C_1}{n_v}\sum_{i=1}^{n_v}\mathcal{L}(y_{iv}f_v^Tx_{iv}) + \rho R(f_v).
\end{equation}
The augmented Lagrange function associated with the D-ERM is:
\begin{equation}\label{equiDLag}
\begin{aligned}
L^D_v(f_{v},s_{vw}, \lambda^{k}_{vw})=&Z_v+\sum_{i\in \mathcal{N}_v}\big(\lambda_{vi}^{a}\big)^T(f_{v}-s_{vi}) +\sum_{i\in \mathcal{N}_v}\big(\lambda_{vi}^{b}\big)^T(s_{vi}-f_{i}) \\
&+\frac{\eta}{2}\sum_{i\in \mathcal{N}_v}( \parallel f_{v}-s_{vi} \parallel^2 + \parallel s_{vi}-f_{i} \parallel^2).
\end{aligned}
\end{equation}
Therefore, the distributed iterative procedures to solve (\ref{equiObjp}) are:
\begin{equation}\label{equifp1}
f_v(t+1)= \arg\min_{f_v}L_v^D\big(f_v,s_{vw}(t), \lambda^k_{vw}(t)\big),
\end{equation}
\begin{equation}\label{equiAuxi} 
\begin{aligned}
s_{vw}(t+1) = \arg\min_{s_{vw}}L_v^D\big(f_{v}(t+1),s_{vw}, \lambda^k_{vw}(t)\big),
\end{aligned}
\end{equation}
\begin{equation}\label{equiLamb_1_1}
\begin{aligned}
\lambda^a_{vw}(t+1)=\lambda^a_{vw}(t)&+\eta(f_{v}(t+1)-s_{vw}(t+1)),\\
& v\in\mathcal{V},\:\:w\in\mathcal{N}_v,
\end{aligned}
\end{equation}
\begin{equation}\label{equiLambd_1_2}
\begin{aligned}
\lambda^b_{vw}(t+1)=\lambda^b_{vw}(t)&+\eta(s_{vw}(t+1)-f_{v}(t+1)),\\
& v\in\mathcal{V},\:\:w\in\mathcal{N}_v.
\end{aligned}
\end{equation}
Here, $s_{vw}(t+1)$ in (\ref{equiAuxi}) can be found in closed form because the cost in (\ref{equiAuxi}) is linear-quadratic in $s_{vw}(t+1)$\cite{forero2010consensus}. By substituting the closed-form solution, we can eliminate $s_{vw}(t+1)$ from $L_v^D$; this approach makes it possible to simplify the iterative procedures (\ref{equifp1}) to (\ref{equiLambd_1_2}). Indeed, according to Lemma 3 in \cite{forero2010consensus}, we can further simplify the distributed iterative procedures by initializing the dual variables  $\lambda_{vw}^k=\mathbf{0}_{d\times d}$ and combining the two sets of dual variables into one as $\lambda_{v}(t) =  \sum_{w\in \mathcal{N}_v}\lambda^k_{vw} $, $v\in \mathcal{V}$, $w\in \mathcal{N}_v$, $k=a$, $b$. Then, we can combine (\ref{equiLamb_1_1}) and (\ref{equiLambd_1_2}) into one update. We simplify (\ref{equifp1})-(\ref{equiLambd_1_2}) by introducing the following.
Let $L_v^{N}(t)$ be the short-hand notation of $L_v^{N}(\{f_{v}\},\{f_v(t)\}, \{\lambda_{v}(t)\})$ as :
\begin{equation}\label{equiLag_p}
\begin{aligned}
L_v^{N}(t):=&\frac{C_1}{n_v}\sum_{i=1}^{n_v}\mathcal{L}(y_{iv} f_{v}^T x_{iv})+\rho R(f_v) +2\lambda_{v}(t)^{T}f_{v} +\eta\sum_{i\in \mathcal{N}_v}\parallel f_{v}-\frac{1}{2}(f_v(t)+f_i(t)) \parallel^2,
\end{aligned}
\end{equation}
where $\parallel \cdot\parallel$ denotes the $l_2$ norm throughout this paper.

The ADMM iterative procedures (\ref{equifp1})-(\ref{equiLambd_1_2}) are reduced to
\begin{equation}\label{equifp_2}
f_v(t+1)= \arg\min_{f_v}L_v^N(f_v,f_v(t), \lambda_{v}(t)),
\end{equation}
\begin{equation}\label{equiLamb2}
\lambda_{v}(t+1) = \lambda_{v}(t)+ \frac{\eta}{2}\sum_{w\in \mathcal{N}_v}[ f_{v}(t+1)-f_{w}(t+1)].
\end{equation}

% ALGORITHM 1

 % algoithm 1
 
 %%%%%%%%%%%%%%%%%%%%%%%%%%%%%%%%%%
% Algorithm 1
% Distributed ERM
\begin{algorithm}[tb]
   \caption{Distributed ERM over VANET}
 
   \label{Algorithm1}
\begin{algorithmic}
  \STATE {\bfseries Required:} Randomly initialize $f_{v}, \lambda_{v} = \mathbf{0}_{d\times 1}$ for every $v\in\mathcal{V}$ 
   \STATE {\bfseries Input:} $\hat{D}$
   \FOR{$t = 0,1,2,3,...$}
   \FOR{$v=0$ {\bfseries to} $P$}
   \STATE {Compute $f_{v}(t+1)$ via (\ref{equifp_2}).}
   \ENDFOR
   \FOR{$v=0$ {\bfseries to} $P$}
   \STATE {Broadcast $f_{v}(t+1)$ to all neighboring vehicles $w\in\mathcal{N}_v$.}
   \ENDFOR
   \FOR{$p=0$ {\bfseries to} $P$}
   \STATE {Compute $\lambda_{v}(t+1)$ via (\ref{equiLamb2}).}
   \ENDFOR
   \ENDFOR
   \STATE{\bfseries Output:} $f^*=f_v$, for all $v\in \mathcal{V}$.
\end{algorithmic}
\end{algorithm}

Algorithm \ref{Algorithm1} summarizes the (non-private) distributed ERM over a VANET. At iteration $t+1$, vehicle $v$ updates its local $f_v(t)$ through (\ref{equifp_2}). Next, $v$ broadcasts the latest $f_v(t+1)$ to all its neighboring vehicles $w\in\mathcal{N}_v$. When each vehicle has updated $\lambda_v(t+1)$ via (\ref{equiLamb2}), iteration $t+1$ finishes. Throughout the entire algorithm, each vehicle $v\in\mathcal{V}$ only updates its own $f_v(t)$ and $\lambda_v(t)$ and the only information exchanged between neighboring vehicles is $f_v(t)$; thus, direct data sharing is avoided. 
There are several methods to solve (\ref{equifp_2}). For example, projected gradient method, Newton method, and Broyden-Fletcher-Goldfarb-Shanno method \cite{dai2013perfect} that approximates the Newton method, to name a few.
In this distributed algorithm, each vehicle solves a minimization problem per iteration using its local training dataset. The only information in the message transmitted by the OBUs between neighboring vehicles is the value of $f_v(t)$.

ADMM-based distributed machine learning has benefits due to its high scalability. It also provides a certain degree of privacy since vehicles do not share training data directly. However, the privacy issue arises when powerful adversaries can make intelligent inferences at each step of the collaborative learning and extract the privacy information contained in the training dataset based on their observation of the learning output of each vehicle. Simple anonymization or conventional sanitization is not sufficient to address the privacy issue as mentioned in the introduction. In the next subsection, we will discuss the privacy concerns about the training data, and propose differential privacy solutions.

\subsection{Privacy Concerns}\label{Privacy_concerns}

As mentioned in the last subsection, the data stored at each vehicle is not exchanged during the entire ADMM algorithm; however, the potential privacy risk still exists. Suppose that the dataset $D_v$ stored at vehicle $v$ contains sensitive information in data point $(x_s, y_s)$ that is not allowed to be known by anyone else. Consider the worst-case scenario when the adversaries know every data point of training data except $(x_s, y_s)$. There exist risks that the information about the sensitive data can be extracted by observing the output of the non-private ADMM-based distributed learning algorithm when the output is transmitted by OBU.

In this paper, we consider a linear classifier $f_v$. The classifier $f_v$ that minimizes the ERM is a linear combination of data points with the labels, which constitute the entire or a subset of the training dataset, near the decision boundary. Let these data-label pairs constitute a subset of the training dataset $D$, which is denoted as $Sb(D)$.
Let $A_1(\cdot): \mathds{R}^d\rightarrow \mathds{R}$ represent Algorithm \ref{Algorithm1} with the output $f_v = A_1(D_v)$ given the dataset $D_v$.
Let $D'_v$ be any dataset such that $D_v$ and $D'_v$ differ by only one data point. Let 
$(x_{s},y_{s})\subset D_v$ and $(x'_{s},y'_{s})\subset D'_v$ be the only pair of data points that are different, i.e., $(x_{s},y_{s})\neq (x'_{s},y'_{s})$. 
Suppose $f = A_1(D'_v)$.
If $(x_{s},y_{s}) \in Sb(D_v)$ and $(x'_{s},y'_{s}) \in Sb(D'_v)$, then 
$P(A_1(D'_v)=f) = 0$ (thus, $\frac{P(A_1(D_v)=f_v)}{P(A_1(D'_v)=f)}= \infty$).

%%%%%%%%%%%%%%

Before describing the attack model, we first introduce the following notations.
Let $A^r:\mathds{R}^d\rightarrow \mathds{R}$ be the randomized version of Algorithm \ref{Algorithm1}, and let $\{f^*_v\}_{v\in\mathcal{V}}$ be the output of $A_1$. It has been proved (e.g., \cite{forero2010consensus}) that if the number of iterations $t\rightarrow \infty$, $f^*_1= f^*_2=\cdots=f^*_P = f^*$, where $f^*$ is the optimal solution of  C-ERM. Since $A_1$ is deterministic, the output $\{f^*_v\}_{v\in\mathcal{V}}$ is deterministic. In the randomized  algorithm, the vehicle $v$ optimizes its local regularized empirical risk using its own training dataset. 
Let $A^r_{tv}$ be the vehicle-$v$-dependent stochastic sub-algorithm of $A^r$ at iteration $t$, and let $f_v(t)$ be the output of $A^r_{tv}(D_v)$ at iteration $t$ with dataset $D_v$. Therefore, $f_v(t)$ is stochastic at each $t$.

We consider the following attack model. The adversary can access the output at every iteration as well as the final output. This type of adversary aims to obtain the sensitive information contained in the private data point in the training dataset by observing the output $f_v(t)$ or $f^*_v$ for all $v\in \mathcal{V}$ at each iteration $t$, not limited to the first iteration. 
We protect the privacy of distributed learning over a VANET using the definition of differential privacy proposed in \cite{dwork2006calibrating}. Specifically, we require that a change of any single data point in the dataset might only change the distribution of the output of the algorithm slightly. It can be realized by adding randomness to the output of the algorithm. 
Recent advances in the privacy-preserving machine learning techniques are not directly applicable since ADMM algorithms are iterative and dynamic; hence we need to extend the notion of privacy to dynamically differential privacy.
To protect the privacy of training data against the adversary in the collaborative learning of a VANET, we propose the concept of dynamic differential privacy, which enables the D-ERM to be privacy-preserving at every stage of learning.

%%%%%%%%%%

%

\begin{mydef}\label{Def1}(Dynamic $\alpha(t)$-Differential Privacy (DDP))
Consider a network of $P$ nodes $\mathcal{V}=\{1,\:2,\:...,\:P\}$, and each node $v$ has a training dataset $D_v$, and $\hat{D} = \bigcup_{v \in\mathcal{V}} D_{v}$.
Let $ A^r: \mathds{R}^{d} \rightarrow \mathds{R}$ be a randomized version of Algorithm \ref{Algorithm1}.
Let $\alpha(t) = (\alpha_1(t), \alpha_2(t),..., \alpha_P(t))\in \mathds{R}^P_{+}$, where $\alpha_v(t)\in \mathds{R}_+$ is the privacy parameter of node $v$ at iteration $t$.
 Let $A^r_{tv}$ be the node-$v$-dependent sub-algorithm of $A^r$, which corresponds to an ADMM iteration at $t$ that outputs $f_v(t)$.
Let $D'_v$ be any dataset with $H_d(D'_v, D_v)=1$, and $g_v(t)= A^r_{tv}(D'_v)$.
We say that the algorithm $A^r$ is \emph{dynamically $\alpha_v(t)$-differentially private} (DDP) if for any dataset $D'_v$, and for all $v\in\mathcal{V}$ that can be observed by the adversaries, and for all possible sets of the outcomes $S\subseteq \mathds{R}$, the following inequality holds:
\begin{equation}
\Pr[f_v(t)\in S]\leq e^{\alpha_v(t)} \cdot\Pr[g_v(t)\in S],
\end{equation}
for all  $t\in\mathbb{Z}$ during a learning process. The probability is taken with respect to $f_v(t)$, the output of $A^r_{tv}$ at every stage $t$. The algorithm $A^r$ is called \emph{dynamically $\alpha(t)$-differentially private} if the above conditions are satisfied.
\end{mydef}

Definition \ref{Def1} provides a suitable differential privacy concept for the adversary in the collaborative learning of a VANET. For DDP algorithm, the adversaries cannot extract additional information of the private data by observing the $f_v(t)$ at any vehicle $v$ any iteration $t$. As mentioned above, Algorithm \ref{Algorithm1} is not DDP since $\frac{P(A_1(D_v)=f_v)}{P(A_1(D'_v)=f)}\rightarrow \infty$. Please note that the optimization at each iteration in ADMM-based learning is uncoupled from each other different iteration. Also, the optimization at each vehicle is uncoupled from each other.
These properties of ADMM make it possible to treat the privacy of each vehicle each iteration independently. In the definition of DDP, the strength of privacy of vehicle $v$ iteration $t$ totally depends on the value of $\alpha_v(t)$ chosen at $t$, which is independent of the number of $\alpha_w(t')$ for all $w\neq v$ and $t'\neq t$. Therefore, the DDP is also independent of the number of iterations. Since each iteration is private, there are no opportunities for privacy leakage in previous iterations the adversaries can take advantage of to extract more information in later iterations.

%%%%%%%%%%%%
 
\begin{figure*}[htpb]
\includegraphics[scale=0.31]{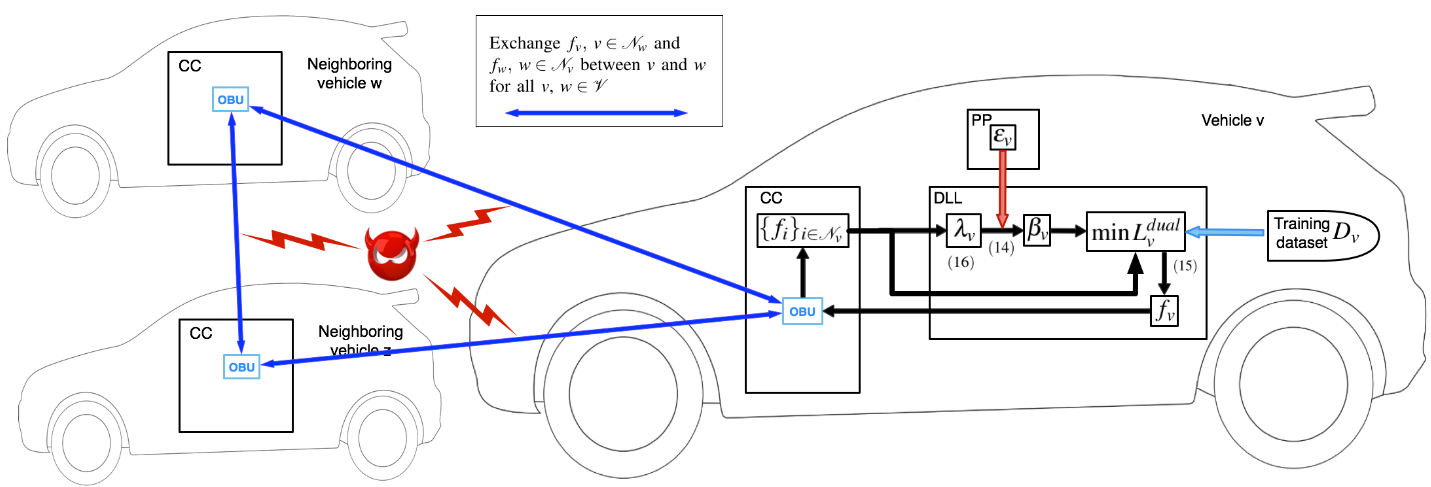}
\centering
\caption{Illustration of DVP during intermediate iterations. The perturbed $\beta_v$ participates in PP mechanism, described by (\ref{equiLamb_dual_pert}). As a result, the output $f_v$ at each iteration is a random variable, and the transmission of $f_v$ is differentially private. The evil red face represents the adversary and the red lighting refer to the possible privacy leakage positions.} \vspace{-6mm}
\label{DVP}
\end{figure*}

\section{Private Collaboration: Dual Variable Perturbation} \label{Se5}

%%%%%%%%%%%%%%%%%%%%%%%%%%%%%%%%%%
% Algorithm 2
% DVP Alg

\begin{algorithm}[tb]
   \caption{Dual Variable Perturbation}
   \label{algDVP}
\begin{algorithmic}[1]
\STATE {\bfseries Required:} Randomly initialize $f_{v}, \lambda_{v} = \mathbf{0}_{d\times 1}$ for every $v\in\mathcal{V}$
   \STATE {\bfseries Input:} $\hat{D}$, $\{[\alpha_v(1),\alpha_v(2),...]\}_{v=1}^{P}$
   \FOR{$t = 0,1,2,3,...$}
   \FOR{$v=1$ {\bfseries to} $P$}
   \STATE {Let $\hat{\alpha}_v = \alpha_v(t) - \ln\Big(1+\frac{C_2}{ \frac{n_v}{C_1}\big(\rho+2\eta N_{v}\big)}\Big)^2$.}
   \IF{$\hat{\alpha}_v >0$}
   \STATE{$ \Phi = 0$.}
   \ELSE
   \STATE{$\Phi= \frac{C_2}{\frac{n_v}{C_1}(e^{\alpha_v(t)/4}-1)}-$
   $\rho-2\eta N_v$ and $\hat{\alpha}_v  = \alpha_v(t)/2$,
   where $N_{v}$ is the number of neighboring vehicles of $v$.}
   \ENDIF
   \STATE{Draw noise $\epsilon_{v}(t)$ according to $\mathcal{K}_v(\epsilon) \sim e^{-\zeta_v(t) \parallel \epsilon \parallel}$ with $\zeta_v(t) = \hat{\alpha}_v $.}
   \STATE{
    {\bfseries \textit{PP:}} Compute $\beta_v(t+1)$  via (\ref{equiLamb_dual_pert}).}
    \STATE{
    {\bfseries \textit{DLL Part 1:}} Compute $f_{v}(t+1)$ via (\ref{equifp_Dual}) with augmented Lagrange function as (\ref{equiLagDual}).}
   \ENDFOR
   \FOR{$v=0$ {\bfseries to} $P$}
   \STATE {
   {\bfseries \textit{CC:}} Broadcast $f_{v}(t+1)$ to all neighboring vehicles $w \in \mathcal{N}_v$.} 
   \ENDFOR
   \FOR{$v=0$ {\bfseries to} $P$}
   \STATE {
   {\bfseries \textit{DLL Part 2:}} Compute $\lambda_{v}(t+1)$ via (\ref{equiLamb_dual}).}
   \ENDFOR
   \ENDFOR
   \STATE{\bfseries Output:} $\{f_v^*\}_{v=1}^{P}$.
\end{algorithmic}
%\AddNote[blue]{1}{5}{Comments for lines 1--5.}
\end{algorithm}
 %%%%%%%%%%%%%%

In the previous section, we have defined a dynamic differential privacy that can capture the notation of data privacy in the collaborative learning over a VANET. In this section, we propose an approach for the privacy-preserving mechanism based on the definition of dynamic differential privacy: \textit{Dual Variable Perturbation} (DVP), and describe the mathematical models of all three components of the P-CML, namely, the PP mechanism, the DLL, and the CC engine.
DVP is proved to be DDP by adding appropriate noise to the deterministic algorithms if the following assumptions are satisfied:

\begin{asu} \label{As1}
The loss function $\mathcal{L}$ is strictly convex and doubly differentiable of $f$ with $| \mathcal{L}'| \leq 1$ and $|\mathcal{L}''|\leq C_2$, where $C_2$ is a constant. Both $\mathcal{L}$ and $\mathcal{L}'$ are continuous.
\end{asu}
\begin{asu} \label{As2}
The regularizer function $R(\cdot)$ is continuous, differentiable, and 1-strongly convex. Both $R(\cdot)$ and $\nabla R(\cdot)$ are continuous.
\end{asu}
\begin{asu} \label{As3}
We assume that $\lVert {x_{iv}} \lVert \leq 1$. Since $y_{iv}\in\{-1,1\}$, $|y_{iv}|=1$.
\end{asu}

Specifically, the DVP provides DDP by perturbing the dual variables $\{\lambda_v(t)\}_{v=1}^{P}$ with a random noise vector $\epsilon_v(t)\in\mathds{R}^d$ with the probability density function
$$
\mathcal{K}_v(\epsilon) \sim e^{-\zeta_v(t) \parallel \epsilon \parallel},
$$
where $\zeta_v(t)$ is a function of $\alpha_v(t)$. In this approach, we add an additional term $\frac{\Phi}{2}\parallel f_v \parallel^2$ to the objective function (\ref{equiObjp}) to make sure that the objective function associated with (\ref{equiLagDual}) is at least $\Phi$-strongly convex. Each iteration starts with perturbing the dual variable $\lambda_v(t)$ updated in the last iteration to a new variable $\beta(t)=\lambda_v(t)+\frac{C_1}{2n_v}\epsilon_v(t)$. Then, the corresponding vehicle-$v$-based augmented Lagrange function $L_v^{N}(t)$ becomes $L_v^{dual}\big(f_{v},f_v(t), \beta_{v}(t+1),\{f_i(t) \}_{i\in\mathcal{N}_v}\big)$. Let $L_v^{dual}(t)$ be a short-hand notation and we have:
\begin{equation}\label{equiLagDual}
\begin{aligned}
L_v^{dual}(t)=& Z_v(f_v|D_v) + \frac{\Phi}{2}\parallel f_v \parallel^2 +2\beta_{v}(t+1)^{T}f_{v} + \eta\sum_{i\in \mathcal{N}_v}\parallel f_{v}-\frac{1}{2}(f_v(t)+f_i(t)) \parallel^2.\\
\end{aligned}
\end{equation}
\noindent Thus, the randomness caused by adding noise $\epsilon_v(t)$ randomizes the minimizer of $L_v^{dual}(t)$. 
Let $C^L_K$ denote the $K$-th (randomized) collaboration. Suppose each $C^L_K$ includes $T_K$ iterations ($T_K$ can be varying in $K$). In our model, the $t$-th iteration of $C^L_K$ is varying in $K$ for all $t\in\{1,\cdots, T_K\}$. This is because that the mechanism at $t$ involves $f_v(t-1)$ and $\beta_v(t-1)$, whose values that were updated in the last iteration $t-1$ are varying due to the random noise at each $t-1$ and at each $K$.

Now we can model three components of the P-CML as:
\begin{itemize}
\item \textbf{PP mechanism: }dual variable $\lambda_v(t)$ is perturbed using Laplace noise $\epsilon_v(t+1)$,
\begin{equation}\label{equiLamb_dual_pert}
\beta_v(t+1) = \lambda_{v}(t)+ \frac{C_1}{2n_v}\epsilon_v(t+1).
\end{equation}
\item \textbf{DLL part 1: }$f_v(t+1)$ is updated by minimizing $L_v^{dual}(t)$,% 18
\begin{equation}\label{equifp_Dual}
f_v(t+1) = \arg\min_{f_v}L_v^{dual}(t).
\end{equation}
\item \textbf{CC engine: }Broadcast $f_v(t+1)$ to all neighboring vehicles $w\in\mathcal{N}_v$.
\item \textbf{DLL part 2: } $\lambda_v (t + 1)$ is updated using all the $\{f_w(t+1)\}_{w\in\mathcal{N}_v}$ from the neighboring vehicles and $f_v(t+1)$,
\begin{equation}\label{equiLamb_dual}
\lambda_v(t+1) = \lambda_v(t)+ \frac{\eta}{2}\sum_{w\in \mathcal{N}_v}[ f_v(t+1)-f_w(t+1)].
\end{equation}
\end{itemize}

The iterations (\ref{equiLamb_dual_pert})-(\ref{equiLamb_dual}) are summarized in Algorithm \ref{algDVP}. Specifically, we have introduced an additional privacy parameter $\hat{\alpha}_v= \alpha_v(t) - 2\ln\Big(1+\frac{C_2}{ \frac{n_v}{C_1}\big(\rho+2\eta N_v\big)}\Big),$ where $N_{v}$ is the number of neighboring vehicles of $v$. $\hat{\alpha}_v$ is required in the proof of Theorem \ref{theorem1}. Two cases of $\hat{\alpha}_v$ are necessary to find the upper bound of the ratio of the Jacobian matrices of the transformation from $f_v(t)$ to $\epsilon_v(t)$ given different datasets (see details in Appendix A in \cite{DBLP:journals/corr/ZhangZ16}).   
Figure \ref{DVP} illustrates each iteration of Algorithm \ref{algDVP}. After the P-CML engine has established a collaboration with neighboring vehicles over a VANET, the ADMM iterations begin. Initially, the DLL at each vehicle generates an initial $\lambda_v(0) = 0$ and a random initial $f_v(0)$. $f_v(0)$ is shared with neighboring vehicles via the CC engine. Every vehicle $v\in\mathcal{V}$ determines its own value of $\rho$ and updates its local parameters $\beta_v(t)$, $f_v(t)$ and $\lambda_v(t)$ at time $t$. At iteration $t+1$, the PP mechanism of $v$ perturbs the $\lambda_v(t-1)$ by a Laplace noise $\epsilon_v(t)$ to generate $\beta_v(t)$ as shown in (\ref{equiLamb_dual_pert}). 
Then, the DLL uses $\beta_v(t)$, $\{f_w(t-1)\}_{w\in\mathcal{N}_v}$, and the local labeled training dataset $D_v$ to update the $f_v(t)$ as shown in (\ref{equifp_Dual}). The CC engine transmits $f_v(t)$ to all the neighboring vehicles, and at the same time, it receives $\{f_w(t)\}_{w\in\mathcal{N}_v}$. Each iteration resumes after the DLL has updated the $\lambda_v(t)$ using $\{f_w(t)\}_{w\in\mathcal{N}_v}$, $f_v(t)$, and $\lambda_v(t-1)$ according to (\ref{equiLamb_dual}). After the P-CML terminates, it transmits the final updated classifier $f^*_v$ to the local detection engine for intrusion detection. The privacy guarantee of DVP is summarized in Theorem \ref{theorem1}.

\begin{theorem} \label{theorem1}
Under Assumption \ref{As1}, \ref{As2} and \ref{As3}, Algorithm \ref{algDVP} solving D-ERM is dynamically $\alpha(t)$-differentially private with $\alpha(t) = (\alpha_1(t),\alpha_2(t),...,\alpha_P(t))$, where $\alpha_v(t)$ is chosen by each vehicle $v\in\mathcal{V}$ at time $t$. Let $Q(f_v(t)|D_v)$ and $Q(f_v(t)|D'_v)$ be the probability density functions of $f_v(t)$ given dataset $D_v$ and $D'_v$, respectively, with $H_d(D_v, D'_v)=1$. The ratio of conditional probabilities of $f_v(t)$ is bounded as follows:
% 20
\begin{equation}\label{equi_Theorem1_bound}
\frac{Q(f_v(t)|D_v)}{Q(f_v(t)|D'_v)}\leq e^{\alpha_v(t)}.
\end{equation}

\end{theorem}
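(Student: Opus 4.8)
The plan is to follow the objective/output-perturbation template for differentially private ERM, adapted to the per-iteration dual-variable perturbation here. Fix a vehicle $v$, an iteration $t$, and a realized output value $f$. The engine of the proof is that, because the loss is strictly convex (Assumption \ref{As1}) and $R$ is strongly convex (Assumption \ref{As2}), the perturbed objective $L_v^{dual}(t)$ in (\ref{equiLagDual}) is strongly convex, so its minimizer $f_v(t+1)$ is unique and the first-order stationarity condition $\nabla_{f_v}L_v^{dual}(t)=0$ defines a smooth bijection between the drawn noise $\epsilon_v(t+1)$ and the output $f$. Solving stationarity for the noise expresses $\epsilon_v$ explicitly as a function of $f$ (and of the fixed quantities $\lambda_v(t)$, $f_v(t)$, $\{f_i(t)\}_{i\in\mathcal{N}_v}$), and the same map with $D'_v$ in place of $D_v$ yields $\epsilon'_v(f)$. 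I would then invoke the change-of-variables formula to write the density of $f_v(t+1)$ as $Q(f\mid D_v)=\mathcal{K}_v(\epsilon_v(f))\,|\det J_v|^{-1}$, where $J_v=\partial\epsilon_v/\partial f$, and reduce the target ratio (\ref{equi_Theorem1_bound}) to a product of a noise-density ratio and a Jacobian-determinant ratio.

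First I would bound the noise-density ratio. Since $D_v$ and $D'_v$ differ in a single data point, $\epsilon_v(f)-\epsilon'_v(f)$ collapses to the difference of the two loss-gradient contributions of the differing points; using $|\mathcal{L}'|\le 1$ (Assumption \ref{As1}), $\|x_{iv}\|\le 1$ and $|y_{iv}|=1$ (Assumption \ref{As3}), this difference has norm bounded by a constant independent of $f$. The reverse triangle inequality then gives $\mathcal{K}_v(\epsilon_v)/\mathcal{K}_v(\epsilon'_v)\le e^{\zeta_v(t)\|\epsilon_v-\epsilon'_v\|}$, which is controlled by the noise parameter $\zeta_v(t)=\hat\alpha_v$.

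The main obstacle is the Jacobian-determinant ratio. Here $J_v$ is, up to scaling, the Hessian of the perturbed objective, which by strong convexity is negative definite with the smallest eigenvalue of $-J_v$ bounded below by $\tfrac{n_v}{C_1}(\rho+\Phi+2\eta N_v)$ (using $\nabla^2 R\succeq I$ and that the loss and consensus Hessians are positive semidefinite). Crucially, $J_v(D_v)$ and $J_v(D'_v)$ differ only in the two rank-one Hessian terms of the differing data point, so the perturbation $E=J_v(D_v)-J_v(D'_v)$ is symmetric of rank at most two with eigenvalues bounded in magnitude by $C_2$ (Assumption \ref{As1}). Writing the determinant ratio as $\det(I+J_v(D'_v)^{-1}E)=(1+\nu_1)(1+\nu_2)$ for the two nonzero eigenvalues $\nu_1,\nu_2$ of $J_v(D'_v)^{-1}E$, and bounding $|\nu_j|\le C_2/[\tfrac{n_v}{C_1}(\rho+2\eta N_v)]$ via the eigenvalue lower bound, I would obtain $|\det J_v(D_v)|/|\det J_v(D'_v)|\le \big(1+\tfrac{C_2}{\frac{n_v}{C_1}(\rho+2\eta N_v)}\big)^2$. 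This is exactly the factor that the definition $\hat\alpha_v=\alpha_v(t)-2\ln\!\big(1+\tfrac{C_2}{\frac{n_v}{C_1}(\rho+2\eta N_v)}\big)$ subtracts off.

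Finally I would combine the two bounds, where the two-case budgeting in Algorithm \ref{algDVP} is calibrated precisely so the product is at most $e^{\alpha_v(t)}$. When $\hat\alpha_v>0$ no extra curvature is needed ($\Phi=0$) and the noise-density factor together with $\big(1+\cdots\big)^2$ collapses to $e^{\alpha_v(t)}$; when $\hat\alpha_v\le 0$, the added $\tfrac{\Phi}{2}\|f_v\|^2$ term with $\Phi=\tfrac{C_2}{\frac{n_v}{C_1}(e^{\alpha_v/4}-1)}-\rho-2\eta N_v$ raises the curvature so the Jacobian ratio shrinks to $e^{\alpha_v/2}$, while spending the remaining half of the budget on noise ($\hat\alpha_v=\alpha_v/2$) caps the density ratio, again yielding $e^{\alpha_v(t)}$. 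Because each iteration draws fresh independent noise and, as noted after Definition \ref{Def1}, the ADMM updates are decoupled across vehicles and iterations, this per-iteration bound is exactly the condition required, establishing dynamic $\alpha(t)$-differential privacy.
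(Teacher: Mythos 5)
Your proposal follows essentially the same route as the paper's proof (deferred to Appendix A of \cite{zhang2017dynamic}): the objective-perturbation template of Chaudhuri et al., using strong convexity to get a bijection between the noise $\epsilon_v(t)$ and the output $f_v(t)$ via first-order stationarity, splitting the density ratio into a noise-density factor (bounded via $|\mathcal{L}'|\le 1$, $\lVert x_{iv}\rVert\le 1$) and a Jacobian-determinant factor (bounded via the rank-at-most-two Hessian difference with eigenvalues at most $C_2$), with the two cases of $\hat{\alpha}_v$ and the added $\frac{\Phi}{2}\lVert f_v\rVert^2$ term budgeting the product to $e^{\alpha_v(t)}$ — precisely the Jacobian-ratio argument the paper itself describes when motivating $\hat{\alpha}_v$. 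The reconstruction, including the verification that $\Phi$ makes the curvature equal $C_2/(e^{\alpha_v(t)/4}-1)$ so the Jacobian ratio collapses to $e^{\alpha_v(t)/2}$, is correct and matches the paper's argument.
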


\begin{proof}
See Appendix A in \cite{zhang2017dynamic}.
\end{proof}

\begin{remark}%[\textbf{1}]
In practice, the VANET topology frequently changes due to the mobility of the vehicles. The change of topology can be caused by the changes of the position, the speed, and the number of vehicles. Therefore, it is possible that the VANET topology changes during the collaborative learning.
Section \ref{Privacy_concerns} has explained the independence of dynamical differential privacy. Since the dynamic differential privacy of the training dataset at $v$ is independent of other iterations and the number of iterations, we can conclude that the DVP algorithm is independent of the speed of the vehicles. Also, the privacy of $v$ is independent of the activities at other vehicles; thus the DVP at $v$ is also independent of the mobility of vehicles in the VANET. Therefore, the dynamic differential privacy and the Algorithm \ref{algDVP} work in the topology-varying VANET. Let $n_v(t)$ represent the time-varying number of vehicles in the topology-varying VANET. Let $\mathcal{N}_v(t)$ and $N_v(t)$ denote the time-varying set of neighboring vehicles and the number of neighboring vehicles, respectively. By substituting the time-varying $n_v(t)$, $\mathcal{N}_v(t)$ and $N_v(t)$ into equations (\ref{equiLamb_dual_pert})-(\ref{equiLamb_dual}), the algorithm is also dynamically differentially private in the topology-varying VANET.
\end{remark}

%%%%%%

\begin{remark}
In this model, the learning is a continuous progress and we do not specify a time window of learning for each vehicle $v$. Specifically, each $v$ decides when to start a new collaborative learning to update the previously updated classifier $f^1_{v}$, or to stop a collaborative learning in progress and keep the newly updated classifier $f^2_{v}$ as the latest intrusion classifier.  Continuous learning is important since the training data keeps being updated. The machine learning algorithm can benefit from the frequent changes of dataset to continuously learn different kinds of attacks and their behaviors and enhancing the knowledge of the security system.
\end{remark}

%%%%%%

\section{Performance Analysis} \label{Se6}
In this section, we discuss the performance of Algorithm \ref{algDVP}. The training data stored at each vehicle is labeled as $1$ (attack) or $-1$ (normal). We consider two types of errors: false positive error and false negative error. The false positive (or false negative) refers to labeling a data point $x''$ as $1$ (or $-1$) when actually $x''$ is $-1$ (or $1$).
We establish performance bounds for the $l_2$ norm regularization functions such that we can train a classifier with low false positive and low false negative.  
The performance analysis is based on the following assumptions:
\begin{asu} \label{As4}
The data-label pair $\{(x_{vi}, y_{vi})\}_{i=1}^{n_v}$ are drawn i.i.d. from a fixed but unknown probability distribution $\mathds{P}^{xy}(x_{vi}, y_{vi})$ at each node $v\in\mathcal{V}$. Also, there is fixed but unknown conditional probability distribution $\mathds{P}^{x|y}(x_{vi}| y_{vi}=q)$ for data points $\{x_{vi}\}_{i=1}^{n_v}$ given $y_{vi}=q$, where $q = -1$ or $1$. 
\end{asu} 
\begin{asu} \label{As5}
$\epsilon_v(t)$ is drawn from $
\mathcal{K}_v(\epsilon) \sim e^{-\zeta_v(t) \parallel \epsilon \parallel},
$ with the same $\alpha_v(t)=\alpha(t)$ (thus the same $\zeta_v(t)$) for all $v\in\mathcal{V}$ at time $t\in\mathbb{Z}$. 
\end{asu} 

According to Assumption \ref{As4}, we define the conditional expected loss function of the classifier $f_v$ of vehicle $v$, given $y$ as:
$
\hat{J}(f_v|y) :=C_1\mathds{E}_{x\sim \mathds{P}^{x|y}}(\mathcal{L}(yf^Tx)|y);
$  
thus, the corresponding conditional expected objective function $\hat{Z}_v$ is
$
\hat{Z}_v(f_v|y) := \hat{J}(f_v|y) + \rho R(f_v).
$

The performance of non-private centralized ERM classification optimization has been already studied in the literature (e.g., \cite{shalev2008svm, chaudhuri2011differentially}). For example, Shalev et al. in \cite{shalev2008svm} introduces a reference classifier $f^0$, and shows that there is a lower bound of the training data size such that the actual (unconditional) expected loss of the $l_2$ regularized support vector machine (SVM) classifier $f_{SVM}$ satisfies
$\hat{J}(f_{SVM})\leq \hat{J}^0 + \mu,$ where $\mu$ is the generalization error and $\hat{J}^0=\hat{J}(f^0)$. The similar argument can be used in this work to study the accuracy of Algorithm \ref{Algorithm1} in terms of conditional expected loss. Let $\hat{J^0}_{x|y=q} = \hat{J}(f^0|y=q)$.
We quantify the performance of Algorithm \ref{Algorithm1} with the final output $f^*$ by the minimum number of data points required to obtain
$
\hat{J}(f^*|y=q) \leq \hat{J^0}_{x|y=q}+\mu_q.
$

However, instead of focusing on only the final output $f^*=\arg\min_{f_v}Z_v(f_v|D_v,y=q)$, for all $v\in\mathcal{V}$, we also care about the performance of the output of each iteration. Let $f_v^{non}(t+1)=\arg\min_{f_v}L_v^N(t)$ be the output of iteration $t$ of the (non-private) Algorithm \ref{Algorithm1} at vehicle $v$. 
Literature has proved that the sequence $\{f_v^{non}(t)\}$ is bounded and converges to $f^*$ as time $t\rightarrow \infty$ (e.g., \cite{forero2010consensus}). Thus, there exists a constant $C^{non}_{q}(t)$ at time $t$ such that
$
\hat{J}(f_v^{non}(t)|y=q) - \hat{J}(f^*|y=q) \leq C^{non}_{q}(t),
$
and substituting it to 
$
\hat{J}(f^*) \leq \hat{J}^0+\mu_q
$
yields:
\begin{equation}\label{section4_2}
\hat{J}(f_v^{non}(t)|y) \leq \hat{J}_{x|y=q}^0+C^{non}_{q}(t) +\mu_q.
\end{equation}
As shown later in this section, the training data size depends on the $\parallel f^0 \parallel$. Usually, the reference classifier is selected with an upper bound on $\parallel f^0 \parallel$. Theorem \ref{theorem2} summarizes the performance analysis of Algorithm \ref{Algorithm1} based on (\ref{section4_2}).
%
%
% Theorem 2: Algorithm 1
\begin{theorem} \label{theorem2}
Let $R(f_v(t))=\frac{1}{2}\parallel f_v(t)\parallel^2$, and let $f^0_1$ and $f^0_{-1}$ such that $\hat{J}(f_1^0|1)=\hat{J}^0_{x|1}$ and $\hat{J}(f_{-1}^0|-1)=\hat{J}^0_{x|-1}$, respectively, for all $v\in\mathcal{V}$ at time $t$, and $\delta_q >0$ is a positive real number for $q=1$ and $-1$. Let $D_v=\Big\{(x_{iv},y_{iv})\subset \mathds{R}^d \times \{-1,1\}\Big\}$ be the dataset of vehicle $v\in\mathcal{V}$. Let $D_v^{(1)}$ and $D_v^{(-1)}$ be the dataset containing all the data points $x_{vi}$ labeled as $1$ and $-1$, respectively; and let $n^{(1)}_v$ and $n^{(-1)}_v$ be the size of $D_v^{(1)}$ and $D_v^{(-1)}$, respectively; thus $D_v= D_v^{(1)}\cup D_v^{(-1)}$, and $n_v = n_v^{(1)}+n_v^{(-1)}$.
Let $f_v^{non}(t+1) = \arg\min_{f_v} L_v^N(f_v,t|D_v)$ be the output of Algorithm \ref{Algorithm1}. 
If Assumption \ref{As1} and \ref{As4} are satisfied, then there exist two constants $C^{(1)}_3$ and $C^{(-1)}_3$ such that if $n^{(1)}_v$ and $n^{(-1)}_v$  satisfy
$$
n^{(1)}_v>C^{(1)}_3\Bigg( \frac{C_1\parallel f_1^0 \parallel^2 \ln(\frac{1}{\delta_1}) }{\mu_1^2}  \Bigg),
$$
and
$$
n^{-1}_v>C^1_3\Bigg( \frac{C_1\parallel f_{-1}^0 \parallel^2 \ln(\frac{1}{\delta_{-1}}) }{\mu_{-1}^2}  \Bigg),
$$
then $f_v^{non}(t+1)$ satisfies
$$
\mathds{P}\big( \hat{J}(f_v^{non}(t+1)|y=q)\leq \hat{J}^0_{x|y=q}+\mu_q+C^{non}_{q}(t) \big)\geq 1-\delta_q,
$$
for all $t\in\mathbb{Z}_+$. Therefore, both false positive and false negative errors are bounded with probability at least $1-\delta_q$.

\end{theorem}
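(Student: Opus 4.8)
The plan is to split $\hat{J}(f_v^{non}(t+1)|y=q)-\hat{J}^0_{x|y=q}$ into an \emph{optimization error} and a \emph{generalization error}, along the decomposition already recorded in (\ref{section4_2}). The optimization error $\hat{J}(f_v^{non}(t+1)|y=q)-\hat{J}(f^*|y=q)\le C^{non}_q(t)$ comes for free from the boundedness and convergence of $\{f_v^{non}(t)\}$ to the centralized optimum $f^*$ (ADMM convergence, \cite{forero2010consensus}), so such a $C^{non}_q(t)$ exists at every $t$. The theorem thus reduces to the statistical claim $\hat{J}(f^*|y=q)\le \hat{J}^0_{x|y=q}+\mu_q$ holding with probability at least $1-\delta_q$, together with the sample-size requirement on $n_v^{(q)}$.

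For this I would condition on $y=q$ and work with $\mathds{P}^{x|y}$, so that $q=1$ and $q=-1$ are handled by two parallel arguments (producing the two constants $C_3^{(1)},C_3^{(-1)}$ and the two thresholds, and simultaneously bounding false positive and false negative errors). Write $\hat{Z}_v(f|y=q)=\hat{J}(f|y=q)+\rho R(f)$ and let $\hat{Z}_v^{emp}$ be its empirical version over the $n_v^{(q)}$ points of $D_v^{(q)}$. Using that $f^*$ minimizes the empirical regularized objective, $\hat{Z}_v^{emp}(f^*)\le \hat{Z}_v^{emp}(f^0_q)$, and bounding $\hat{J}(f^*|y=q)\le \hat{Z}_v(f^*|y=q)$, I would add and subtract the empirical objectives; since the regularizer cancels in $\hat{Z}_v-\hat{Z}_v^{emp}=\hat{J}-\hat{J}^{emp}$, one is left with the deviation $\hat{J}^{emp}(f^0_q)-\hat{J}(f^0_q)$ at the \emph{fixed} reference, the deviation $\hat{J}(f^*|y=q)-\hat{J}^{emp}(f^*|y=q)$ at the \emph{data-dependent} minimizer, and the penalty $\rho R(f^0_q)=\tfrac{\rho}{2}\parallel f^0_q\parallel^2$.

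The fixed-reference term is controlled by a concentration inequality: by Assumption \ref{As1} the loss is $1$-Lipschitz and by Assumption \ref{As3} $\parallel x\parallel\le 1$, so $\mathcal{L}(q\,(f^0_q)^{T}x)$ has range $O(\parallel f^0_q\parallel)$, and Hoeffding gives a deviation of order $C_1\parallel f^0_q\parallel\sqrt{\ln(1/\delta_q)/n_v^{(q)}}$ with probability $1-\delta_q$ — the source of both the $\ln(1/\delta_q)$ factor and the quadratic $1/\mu_q^2$ scaling once the deviation is forced below $\mu_q$. The data-dependent term is the delicate one: since $f^*$ depends on the sample, a single concentration bound does not apply, so I would invoke the uniform algorithmic stability of $\ell_2$-regularized ERM (Bousquet--Elisseeff style, as in \cite{chaudhuri2011differentially}), which follows from the $\rho$-strong convexity of $\hat{Z}_v$ (Assumption \ref{As2} makes $R$, and hence $\rho R$, strongly convex) together with the Lipschitz loss of Assumption \ref{As1}; this yields a stability constant of order $1/(\rho\,n_v^{(q)})$ and hence a matching bound on the generalization gap, upgraded to high probability by McDiarmid.

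Finally I would balance the three contributions by taking $\rho$ of order $\mu_q/\parallel f^0_q\parallel^2$, which drives both the penalty $\tfrac{\rho}{2}\parallel f^0_q\parallel^2$ and the stability term below $\mu_q$ as soon as $n_v^{(q)}$ exceeds a threshold of the form $C_3^{(q)}\big(C_1\parallel f^0_q\parallel^2\ln(1/\delta_q)/\mu_q^2\big)$; reading off the leading constant gives $C_3^{(q)}$, and a union bound over the two deviation events yields probability $1-\delta_q$ for each label. I expect the main obstacle to be the data-dependent deviation term: getting the stability constant to scale as $1/(\rho n_v^{(q)})$ and then combining it cleanly with the fixed-reference concentration so that the threshold carries exactly the advertised dependence on $C_1$, $\parallel f^0_q\parallel^2$, $\ln(1/\delta_q)$ and $\mu_q^2$ is where the careful bookkeeping lives; the remainder is the standard regularized-ERM template of \cite{shalev2008svm,chaudhuri2011differentially} specialized to the two conditional distributions.
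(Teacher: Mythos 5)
Your first paragraph is exactly the paper's reduction: the ADMM optimization error is absorbed into $C^{non}_q(t)$ via the convergence of $\{f_v^{non}(t)\}$ to the centralized minimizer $f^*$, which the paper itself records as (\ref{section4_2}), so the theorem does reduce to the statistical claim about $f^*$, handled per label with two parallel arguments. The genuine gap is in the piece you yourself flag as delicate: the data-dependent deviation at $f^*$. Bousquet--Elisseeff uniform stability for a $\rho$-strongly-convex regularized objective with Lipschitz loss does give a stability constant $\beta=O(1/(\rho\, n_v^{(q)}))$, but that controls the generalization gap only \emph{in expectation}. The McDiarmid upgrade to high probability in the Bousquet--Elisseeff framework yields a bound of order $(n_v^{(q)}\beta+M)\sqrt{\ln(1/\delta_q)/n_v^{(q)}}=O\big(\sqrt{\ln(1/\delta_q)}\,/(\rho\sqrt{n_v^{(q)}})\big)$ --- the stability constant gets multiplied by $n$, so the $1/n$ rate degrades to $1/\sqrt{n}$. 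After your balancing choice $\rho\sim\mu_q/\parallel f^0_q\parallel^2$, driving this term below $\mu_q$ forces $n_v^{(q)}\gtrsim \parallel f^0_q\parallel^4\ln(1/\delta_q)/\mu_q^4$, strictly worse than the advertised threshold $C^{(q)}_3\, C_1\parallel f^0_q\parallel^2\ln(1/\delta_q)/\mu_q^2$. So the route as proposed proves a weaker statement with $1/\mu_q^4$ sample complexity; no amount of constant bookkeeping repairs it.

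The proof the paper relies on (Appendix D of \cite{zhang2017dynamic}, following \cite{shalev2008svm} and \cite{chaudhuri2011differentially}) avoids this by invoking the fast-rate oracle inequality for strongly convex regularized objectives due to Sridharan, Shalev-Shwartz, and Srebro: with probability at least $1-\delta_q$, the excess \emph{regularized} risk of the empirical minimizer over any fixed reference is $O\big(\ln(1/\delta_q)/(\rho\, n_v^{(q)}/C_1)\big)$ --- a genuine high-probability $1/n$ rate, obtained by exploiting strong convexity inside the concentration argument rather than applying McDiarmid after the fact. Unwinding with $\hat{Z}^{emp}_v(f^*)\leq \hat{Z}^{emp}_v(f^0_q)$ gives $\hat{J}(f^*|y=q)\leq \hat{J}^0_{x|y=q}+\frac{\rho}{2}\parallel f^0_q\parallel^2+O\big(C_1\ln(1/\delta_q)/(\rho\, n_v^{(q)})\big)$, and setting $\rho=\mu_q/\parallel f^0_q\parallel^2$ makes both correction terms $O(\mu_q)$ precisely when $n_v^{(q)}$ exceeds the stated threshold; note also that your separate Hoeffding step at the fixed reference becomes unnecessary, since the reference enters only through the empirical-minimality inequality inside the oracle bound. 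A secondary slip: you lean on Assumption \ref{As3} ($\parallel x_{iv}\parallel\leq 1$) for the Lipschitz/range control even though the theorem hypothesizes only Assumptions \ref{As1} and \ref{As4} --- harmless in context, but worth flagging, since some boundedness of the data is indeed silently required.
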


\begin{proof}
See Appendix D in \cite{zhang2017dynamic}.
\end{proof}

Usually, $\mu_q\leq 1$ is required for most machine learning algorithms. In the case of SVM, if the constraints are  $y_i f^Tx_i\leq C_{SVM}$, for $i = 1,\:,...,\:n_{SVM}$, the classification margin is $ \frac{C_{SVM}}{\parallel f^0 \parallel}$. Therefore, if we want to maximize the margin $ \frac{C_{SVM}}{\parallel f^0 \parallel}$, a large value of $\parallel f^0 \parallel$ is required. A larger value of $\parallel f^0 \parallel$ is usually chosen for non-separable or small-margin problems. 
In the following subsection, we use the similar analysis for the performance of Algorithm \ref{algDVP}.

\subsection{Performance of DVP}
Similarly, Algorithm \ref{algDVP} solves one optimization problem minimizing $L_v^{dual}(f_v,t|D_v)$ at each iteration $t$ vehicle $v$. Suppose at iteration $\tau$, we generate a noise term $\epsilon_v(\tau) = \epsilon$. If we fix the noise term $\epsilon_v(t') = \epsilon$ for all $t'>\tau$, then the algorithm becomes static starting from $\tau$. Let Alg-2 denotes this corresponding static algorithm associated with Algorithm \ref{algDVP}. Therefore, solving Alg-2 is equivalent to solving the optimization problem with the object function $Z_v^{dual}(f_v, \tau|D_v, \epsilon)$, defined as follows:
$$
\begin{aligned}
Z_v^{dual}(f_v, \tau|D_v, \epsilon) :=Z_v(f_v|D_v)+\frac{C_1}{n_v}\epsilon f_v.
\end{aligned}
$$ 
Let $Z_v^{dual}(\tau)$ be the short-hand notation of $Z_v^{dual}(f_v, \tau|D_v, \epsilon)$. Note that the index $\tau$ indicates that this objective is based on the noise $\epsilon_v(\tau)=\epsilon$ generated at iteration $\tau$ of Algorithm \ref{algDVP}.
Let $f'_v(t)$ and $\lambda'_v(t)$ be the primal and dual updates, respectively, of the ADMM-based algorithm minimizing $Z_v^{dual}(\tau)$ at iteration $t$. Then, Alg-2 can be interpreted as minimizing $Z_v^{dual}(\tau)$ with $f'_v(0)=f_v(\tau)$ and $\lambda'_v(0)=\lambda_v(\tau)$ as initial conditions for all $v\in\mathcal{V}$. 
Since $Z_v^{dual}(\tau)$ is real and convex, similar to Algorithm \ref{Algorithm1}, the sequence $\{f'_v(t)\}$ is bounded and $f'_v(t)$ converges to $f^{'*}_v(\tau) =  \arg\min_{f'_{v}}Z_v^{dual}(\tau)$, which is a limit point of $f'_v(t)$. 
Therefore, there exists a constant $C^{dual}_{v,y=q}(t)$ given noise term $\epsilon$ fixed from iteration $\tau$ of Algorithm \ref{algDVP} such that 
$$
\hat{J}(f_v(\tau)|y=q) - \hat{J}({f'}^*(\tau)|y=q)\leq C^{dual}_{v,y=q}(t),
$$
for $q=1$ or $-1$.
The way we analyze the performance in Theorem \ref{theorem2} can also be used in the case of DVP. Specifically, the performance is measured by the training data sizes, $n^{(1)}_v$ and $n^{(-1)}_v$ for data points $x_{vi}$ labeled by $y_{vi}=1$ and $y_{vi}=-1$, respectively, for all $v\in \mathcal{V}$ required to obtain
$$
\hat{J}(f_v(\tau)|y=q) \leq \hat{J}^0_{x|y=q}(\tau)+\mu_q+C^{dual}_{v,y=q}(\tau),
$$
for $q=1$ and $-1$.
We say that each $f_v(\tau)$ is accurate with low false positive ($1$) or false negative ($-1$) error if it satisfies the above inequality.
The analysis of the performance for Algorithm \ref{algDVP}, the DVP, is summarized in Theorem \ref{theorem3} and Corollary \ref{corollary41}.

\begin{theorem}\label{theorem3}
Let $R(f)=\frac{1}{2}\parallel f\parallel^2$, and let $f^0_{v,y=1}(\tau)$ and $f^0_{v,y=-1}(\tau)$ such that 
$\hat{J}(f^0_{y=1}(\tau)|1)=\hat{J}^0_{x|y=1}(\tau)$ and $\hat{J}(f^0_{y=-1}(\tau)|-1)=\hat{J}^0_{x|y=-1}(\tau)$ for all $v\in\mathcal{V}$. Let $\delta_1$ and $\delta_{-1}$ be positive numbers. Let $D_v= D_v^{(1)}\cup D_v^{(-1)}=\Big\{(x_{iv},y_{iv})\subset \mathds{R}^d \times \{-1,1\}\Big\}$ be the labeled dataset of vehicle $v\in\mathcal{V}$, where $D_v^{(1)}$ and $D_v^{(-1)}$ are the datasets of size $n^{(1)}_v$ and $n^{(-1)}_v$, respectively, containing all the data points $x_{vi}$ labeled as $1$ and $-1$, respectively.
If Assumption \ref{As1}, \ref{As4} and \ref{As5} are satisfied, then there exist two constants $C^{(1)}_4$ and $C^{(-1)}_4$ such that if the number of data points $n^{(q)}_v$ satisfies
$$
\begin{aligned}
n^{(q)}_v > C^{(q)}_4\max \Bigg( & \max_{\tau} \Big( \frac{\parallel f^0_{v,y=q}(\tau) \parallel d \ln(\frac{d}{\delta_q})}{\mu_q\alpha_v(\tau)}\Big),\max_t \Big(\frac{C_1C_2\parallel f_{v,y=q}^0(\tau) \parallel^2}{\mu_q\alpha_v(\tau)}\Big), \max_t \Big(\frac{C_1\parallel f^0_{v,y=q}(\tau) \parallel^2 \ln(\frac{1}{\delta_q}) }{\mu_q^2}\Big) \Bigg),
\end{aligned}
$$
then $f^*_v(\tau)$ satisfies
$$
\mathds{P}\big( \hat{J}(f^*_v(\tau)|y=q)\leq \hat{J}^0_{x|y=q}(\tau)+\mu_q  \big)\geq 1-2\delta_q,
$$
for all $q =1$ and $-1$.
\end{theorem}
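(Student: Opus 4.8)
The plan is to reproduce, in the per‑vehicle, per‑iteration and label‑conditioned setting of Theorem \ref{theorem2}, the objective‑perturbation utility analysis of Chaudhuri et al. \cite{chaudhuri2011differentially}. By the reduction established just before the statement, with the noise $\epsilon=\epsilon_v(\tau)$ frozen the converged private classifier is exactly $f^*_v(\tau)=\arg\min_{f_v}Z_v^{dual}(\tau)$ with $Z_v^{dual}(\tau)=Z_v(f_v|D_v)+\tfrac{C_1}{n_v}\epsilon f_v$; since $R(f)=\tfrac12\|f\|^2$ and (in the regime forced below) $\Phi=0$, the map $Z_v$ is $\rho$‑strongly convex. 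Working conditionally on $y=q$ as in Theorem \ref{theorem2}, I would let $f^{non}_v:=\arg\min_{f_v}Z_v(f_v|D_v)$ denote the non‑private minimizer and split the excess conditional loss as
\[
\hat{J}(f^*_v(\tau)|y=q)-\hat{J}^0_{x|y=q}(\tau)=\big[\hat{J}(f^*_v(\tau)|q)-\hat{J}(f^{non}_v|q)\big]+\big[\hat{J}(f^{non}_v|q)-\hat{J}^0_{x|y=q}(\tau)\big].
\]
The second bracket is precisely the non‑private excess risk controlled in Theorem \ref{theorem2} through the reference‑classifier argument of Shalev et al. \cite{shalev2008svm}; it is at most $\mu_q/2$ with probability $\ge 1-\delta_q$ once $n^{(q)}_v$ exceeds the third term $\frac{C_1\|f^0_{v,y=q}(\tau)\|^2\ln(1/\delta_q)}{\mu_q^2}$ of the stated maximum.

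The remaining work is to drive the first bracket, the cost of privacy, below $\mu_q/2$. First I would bound the displacement of the minimizer: comparing the first‑order optimality conditions of $f^*_v(\tau)$ and $f^{non}_v$ and invoking $\rho$‑strong convexity of $Z_v$ gives $\|f^*_v(\tau)-f^{non}_v\|\le \tfrac{2C_1\|\epsilon\|}{n_v\rho}$, and transferring this to the expected conditional loss uses the Lipschitz estimate $|\hat{J}(f_1|q)-\hat{J}(f_2|q)|\le C_1\|f_1-f_2\|$, which follows from $|\mathcal{L}'|\le 1$ (Assumption \ref{As1}) and $\|x_{iv}\|\le 1$ (Assumption \ref{As3}). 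The noise radius is handled by the tail of the perturbation: since $\epsilon\sim\mathcal{K}_v(\epsilon)\propto e^{-\zeta_v(\tau)\|\epsilon\|}$ on $\mathds{R}^d$ with $\zeta_v(\tau)=\hat\alpha_v$ (Assumption \ref{As5}), the radius $\|\epsilon\|$ is Gamma distributed with shape $d$ and rate $\hat\alpha_v$, so $\|\epsilon\|\le \frac{d\ln(d/\delta_q)}{\hat\alpha_v}$ with probability $\ge 1-\delta_q$, which is the source of the $d\ln(d/\delta_q)$ factor. Balancing the regularization bias $\tfrac{\rho}{2}\|f^0_{v,y=q}(\tau)\|^2$ against this noise by taking $\rho\asymp \mu_q/\|f^0_{v,y=q}(\tau)\|^2$, and feeding in the sharper regularized‑ERM generalization estimate of \cite{chaudhuri2011differentially}, reduces the privacy cost below $\mu_q/2$ exactly when $n^{(q)}_v$ exceeds the first term $\frac{\|f^0_{v,y=q}(\tau)\|\,d\ln(d/\delta_q)}{\mu_q\,\alpha_v(\tau)}$.

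It remains to replace $\hat\alpha_v$ by $\alpha_v(\tau)$ and to collect the probabilities. Applying $\ln(1+x)\le x$ to the definition $\hat\alpha_v=\alpha_v(\tau)-2\ln\!\big(1+\tfrac{C_2}{(n_v/C_1)(\rho+2\eta N_v)}\big)$, together with the choice of $\rho$ above, one checks that $\hat\alpha_v\ge \alpha_v(\tau)/2$ (so $\Phi=0$ is indeed active) as soon as $n^{(q)}_v$ exceeds the second term $\frac{C_1C_2\|f^0_{v,y=q}(\tau)\|^2}{\mu_q\alpha_v(\tau)}$; this is precisely where $C_2=\sup|\mathcal{L}''|$ enters. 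Finally I would union‑bound the two failure events, the data‑sampling failure in the second bracket and the noise‑radius failure in the first, each of probability at most $\delta_q$, to obtain the guarantee with probability $\ge 1-2\delta_q$. Taking the maximum over $\tau$ makes the requirement uniform across iterations, which is what the $\max_\tau$/$\max_t$ in the statement encode, and the absolute constants are absorbed into $C^{(q)}_4$; repeating the argument verbatim for $q=1$ and $q=-1$ bounds the false positive and false negative errors simultaneously.

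The main obstacle is obtaining the \emph{sharp} first two sample‑size terms, which are linear in $\|f^0_{v,y=q}(\tau)\|$ (term one) and carry only a single power of $\mu_q$: a naive one‑step Lipschitz bound on $\hat{J}(f^*_v(\tau)|q)-\hat{J}(f^{non}_v|q)$, after substituting $\rho\asymp\mu_q/\|f^0\|^2$, would yield squared powers of $\|f^0\|$ and $\mu_q^{-1}$ and miss the claimed rate. Getting the exponents right requires the coupled choice of $\rho$ together with the more delicate empirical‑to‑population transfer for strongly convex regularized ERM, i.e. faithfully reproducing the technical core of the objective‑perturbation utility bound of \cite{chaudhuri2011differentially} in the present ADMM/consensus and label‑conditioned form, while simultaneously keeping the noise‑rate correction $\hat\alpha_v\ge\alpha_v(\tau)/2$ consistent with that choice of $\rho$.
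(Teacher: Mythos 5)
Your proposal matches the paper's own proof strategy: the paper defers to Appendix E of \cite{zhang2017dynamic}, which proves exactly this bound by reducing to the static perturbed objective $Z_v^{dual}(\tau)$ and adapting the objective-perturbation utility analysis of Chaudhuri et al.\ \cite{chaudhuri2011differentially} to the per-vehicle, per-iteration, label-conditioned setting --- the same three-term sample-complexity structure you derive, with the Gamma tail bound $\|\epsilon\|\le d\ln(d/\delta_q)/\hat\alpha_v$ yielding the first term, the $\hat\alpha_v\ge\alpha_v(\tau)/2$ consistency check (where $C_2$ enters) yielding the second, the reference-classifier generalization argument of \cite{shalev2008svm} with $\rho\asymp\mu_q/\|f^0_{v,y=q}(\tau)\|^2$ yielding the third, and a union bound giving $1-2\delta_q$. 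You also correctly identify the one genuinely delicate point --- that a naive displacement-plus-Lipschitz transfer gives the wrong exponents and the quadratic-in-noise regularized-risk comparison from \cite{chaudhuri2011differentially} is needed for the sharp first term --- which is precisely the technical core the referenced appendix reproduces.
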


\begin{proof}
See Appendix E in \cite{zhang2017dynamic}.
\end{proof}

\begin{corollary} \label{corollary41}
\textit{Let $f_v(\tau) = \arg\min_{f_v} L_v^{dual}(f_v,\tau-1|D_v)$ be the updated classifier of Algorithm \ref{algDVP} and let $f^0_{v, y=q}(\tau)$ be a reference classifier such that $\hat{J}(f^0_v(\tau)|y=q)=\hat{J}_{x|y=q}^0(\tau)$ for $q=1$ and $-1$. If all the conditions of Theorem \ref{theorem3} are satisfied, then $f_v(\tau)$ satisfies \begin{equation}\label{Coro_DVP_Conv}
\mathds{P}\big(\hat{J}(f_v(\tau)|y=q)\leq \hat{J}^0_{x|y=q}(\tau)+\mu_q +C_{v,y=q}^{dual}(\tau) \big)\geq 1-2\delta_q,
\end{equation}}
for all $q =1$ and $-1$.
\end{corollary}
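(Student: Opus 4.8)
The plan is to derive Corollary~\ref{corollary41} directly from Theorem~\ref{theorem3} by combining the high-probability accuracy bound for the \emph{limit point} with the deterministic convergence estimate for the \emph{actual iterate} that was established in the discussion preceding the corollary. The essential observation is that Theorem~\ref{theorem3} controls $f^*_v(\tau)=\arg\min_{f'_v}Z_v^{dual}(\tau)$, whereas the corollary concerns the iterate $f_v(\tau)$ produced by Algorithm~\ref{algDVP}; the two statements differ only by the additive gap $C^{dual}_{v,y=q}(\tau)$ and by the substitution of $f_v(\tau)$ for $f^*_v(\tau)$ on the left-hand side. Hence no new data-size hypothesis is needed, and the corollary reuses exactly the assumptions of the theorem.

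First I would recall the identification between the dynamic iterate and the auxiliary static problem Alg-2: freezing the realized noise $\epsilon_v(\tau)=\epsilon$ from iteration $\tau$ onward turns Algorithm~\ref{algDVP} into the static ADMM scheme minimizing $Z_v^{dual}(\tau)$, initialized at $f'_v(0)=f_v(\tau)$. Because $Z_v^{dual}(\tau)$ is real-valued and convex, the sequence $\{f'_v(t)\}$ is bounded and converges to $f^*_v(\tau)$, which, as already argued in the text, yields the pathwise bound
$$
\hat{J}(f_v(\tau)|y=q) - \hat{J}(f^*_v(\tau)|y=q) \leq C^{dual}_{v,y=q}(\tau),
$$
for $q=1$ and $q=-1$. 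The point to stress is that this estimate holds for \emph{every} realization of the data and the noise, with the constant $C^{dual}_{v,y=q}(\tau)$ understood as the realization-dependent convergence gap.

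Next I would invoke Theorem~\ref{theorem3}: under Assumptions~\ref{As1}, \ref{As4}, \ref{As5} and the stated lower bounds on $n^{(q)}_v$, the limit point obeys
$$
\mathds{P}\big( \hat{J}(f^*_v(\tau)|y=q) \leq \hat{J}^0_{x|y=q}(\tau) + \mu_q \big) \geq 1 - 2\delta_q.
$$
On the event of probability at least $1-2\delta_q$ on which this inequality holds, I add the deterministic gap from the previous display to obtain
$$
\hat{J}(f_v(\tau)|y=q) \leq \hat{J}^0_{x|y=q}(\tau) + \mu_q + C^{dual}_{v,y=q}(\tau),
$$
which is precisely (\ref{Coro_DVP_Conv}); the conclusion then holds separately for each $q$. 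The only step requiring care, rather than a genuine obstacle, is the probabilistic bookkeeping: since the convergence estimate holds pathwise, intersecting it with the high-probability event of Theorem~\ref{theorem3} incurs no additional failure probability, so the $2\delta_q$ budget already spent in the theorem is not exceeded. Everything else is a direct substitution.
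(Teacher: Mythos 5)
Your proposal is correct and takes essentially the same approach as the paper's own proof: both combine the gap inequality $\hat{J}(f_v(\tau)|y=q)-\hat{J}(f^*_v(\tau)|y=q)\leq C^{dual}_{v,y=q}(\tau)$ from the discussion of the static auxiliary problem with the high-probability bound of Theorem~\ref{theorem3}, and add the two on the event of probability at least $1-2\delta_q$. Your explicit remark that the convergence estimate holds pathwise, so intersecting with the theorem's event incurs no extra failure probability, merely makes precise a step the paper leaves implicit.
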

\begin{proof}
The inequality $\hat{J}(f_v(\tau)|y=q) - \hat{J}(f^*_v(\tau)|q)\leq C^{dual}_{v,y=q}(\tau)$
holds for $f_v(\tau)$ and $f^*_v(\tau)$, for $q=1$ and $-1$, and from Theorem 3, 
$ \mathds{P}\big( \hat{J}(f^*_v(\tau)|y=q)\leq \hat{J}_{x|y=q}^0(\tau)+\mu_q  \big)\geq 1-2\delta_q, $
for $q=1$ and $-1$.
Therefore, we can have (\ref{Coro_DVP_Conv}).
\end{proof}

\section{Numerical Experiments}\label{Se7}

% This Figure 5
\begin{figure*}%
\centering
\begin{subfigure}{.5\columnwidth}
\includegraphics[width=\columnwidth]{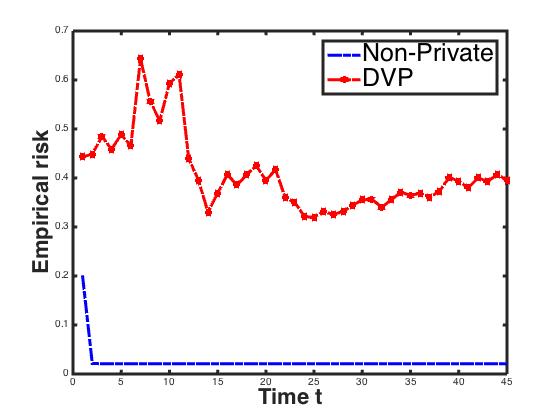}%
\caption{DVP: $\alpha_v = 0.01$}%
\label{conv_1}%
\end{subfigure}\hfill% 
\begin{subfigure}{.5\columnwidth}
\includegraphics[width=\columnwidth]{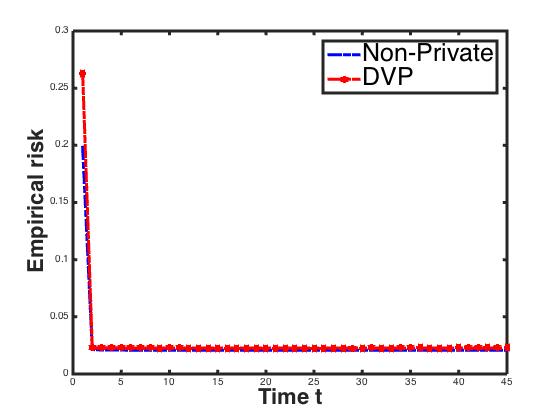}%
\caption{DVP: $\alpha_v = 0.5$}%
\label{conv_2}%
\end{subfigure}\hfill%
\begin{subfigure}{.5\columnwidth}
\includegraphics[width=\columnwidth]{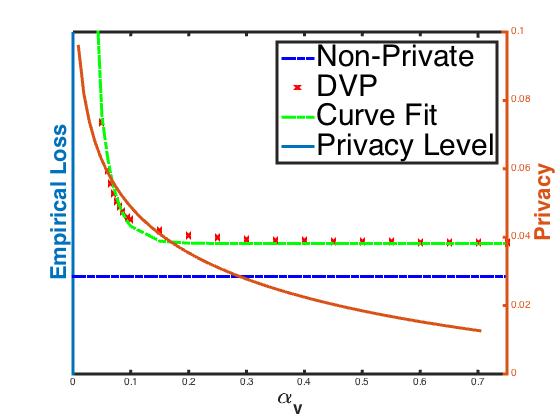}%
\caption{Tradeoff: empirical risk}%
\label{tradeoffRisk}%
\end{subfigure}\hfill%
\begin{subfigure}{.5\columnwidth}
\includegraphics[width=\columnwidth]{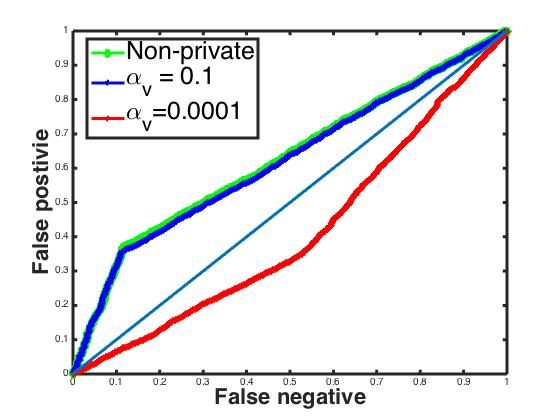}%
\caption{ROC}%
\label{MisE}%
\end{subfigure}\hfill%
\caption{Figure \ref{conv_1}-\ref{conv_2}: convergence  with different (fixed) values of $\alpha_v(t)$; DVP with $\rho = 10^{-2.5}$, $C_1=650$; non-private (Algorithm \ref{Algorithm1}) with  $\rho = 10^{-9}$, $C_1=1250$. Figure \ref{tradeoffRisk}: security-privacy tradeoff measured by the false positive and false negative error rates with $C_{v1} = 20$, $C_{v2}=6$, $C_{v3}=5$, $C_{v4}=1$.
Figure \ref{MisE}:Receiver operating characteristic (ROC) curve for non-private, DVP with different values of $\alpha_v$; DVP with $\rho = 10^{-2.5}$, $C_1=650$; non-private (Algorithm \ref{Algorithm1}) with  $\rho = 10^{-9}$, $C_1=1250$.}
\label{figTradeoff}\vspace{-3mm}
\end{figure*}

In this section, we test the learning performance of Algorithm \ref{algDVP} and explore the tradeoff between security and privacy. We simulate the user and system activities, the communication activities in the AUs and OBUs of the VANET based on the \textit{NSL-KDD} data,
which is the refined version of its predecessor of \textit{KDD'99} and solves some of the inherent problems of the KDD'99 \cite{tavallaee2009detailed}. The NSL-KDD dataset contains essential records of the complete KDD dataset. Each record contains $41$ attributes indicating different features of flow with a label assigned either as an attack or normal. 
Due to the lack of public datasets for network-based IDSs, the NSL-KDD is currently the best available dataset for benchmarking of different intrusion detection methods \cite{tavallaee2009detailed, buczak2015survey}.
%Due to the lack of public datasets for network-based IDSs, the NSL-KDD is current the best available dataset and still can be regarded as an effective benchmark dataset to be used in the researches of different intrusion detection methods, though it is a relatively old dataset that does not include cases of all new attacks that have occurred since 1999 \cite{tavallaee2009detailed, buczak2015survey}.

%
In the experiments, the task is to classifier whether a network activity is an attack ($1$) or normal ($-1$) using \textit{logistic regression}. There are four types of attacks presented in NSL-KDD, namely, denial of service, probing, unauthorized access to local system administrator privileges, and unauthorized access from a remote machine \cite{dhanabal2015study}. In this experiment, we only classify whether an activity is an attack or normal without identifying the specific type of the attack.

We also propose an approach to select an optimal value of $\alpha_v(t)$ that can manage the tradeoff between security and privacy by introducing a utility function of privacy. In the experiments, we fix the value of $\alpha_v(t)$ for each entire running of Algorithm \ref{algDVP}; thus, the noise of each vehicle $v\in\mathcal{V}$ generated at each running of DVP is i.i.d.

To process the NSL-KDD dataset into a form suitable for the classification learning and satisfying the Assumption \ref{As3}, we process the NSL-KDD dataset according to the procedures suggested in \cite{mohamad2015hybrid}. The main processes include the transformation of symbolic attributes to numeric values, feature selection that eliminates irrelevant, noisy or redundant features, data normalization that helps speed up the learning.

When the P-CML engine in vehicle $v$ is initiated, collaboration is established over the VANET. As shown in Figure \ref{VANET1}, the vehicle $v\in\mathcal{V}$ only communicates with vehicles in its one-hop neighborhood composed of three vehicles, $a$, $b$, and $c\in \mathcal{N}_v$, which also communicate with neighboring vehicles directly or through the RSU (e.g., $a$ and $c$). Each vehicle in the collaboration updates its own primal and dual parameters simultaneously.

\subsection{Logistic Regression}
In the experiments, we test the DVP-based Algorithm \ref{algDVP} using logistic regression. Let $\mathcal{L}_{lr}$ be the loss function of logistic regression, which has the form
$$
\begin{aligned}
\mathcal{L}_{lr}(y_{iv}f^T x_{iv})= \log(1+\exp(-y_{iv} f_v^T x_{iv})).
\end{aligned}
$$ 
Clearly, the first and the second order derivatives of $\mathcal{L}_{lr}$ can be bounded as $|\mathcal{L}'_{lr}|\leq 1$ and $|\mathcal{L}''_{lr}|\leq \frac{1}{4}$, respectively.
Therefore, the logistic regression satisfies the conditions in Assumption \ref{As1} with $C_2 = \frac{1}{4}$. In this paper, we use the regularization function $R(f_v) = \frac{1}{2}\parallel f_v \parallel^2$. Then, we can directly use $\mathcal{L}=\mathcal{L}_{lr}$ in Theorem \ref{theorem1} to guarantee the DDP.

\subsection{Convergence of Collaborative Learning}

In the first set of experiments, we test the convergence of Algorithm \ref{algDVP}. The learning performance is measured by the empirical risk (ER). In this experiment, each entire running of the algorithm is based on a fixed value of $\alpha_v(t)$. Figure \ref{conv_1}  and \ref{conv_2} show the test results. As can be seen, larger $\alpha_v(t)$'s lead to faster convergence; for $\alpha_v=0.5$, the converged ER is close to the ER of non-private learning (i.e., Algorithm \ref{Algorithm1}).

\subsection{Security-Privacy Tradeoff}\label{Etradeoff}

%%%%%%%%%%%%%%%%%%%%
\begin{figure}[htpb]
\vspace{-4mm}\includegraphics[scale=0.3]{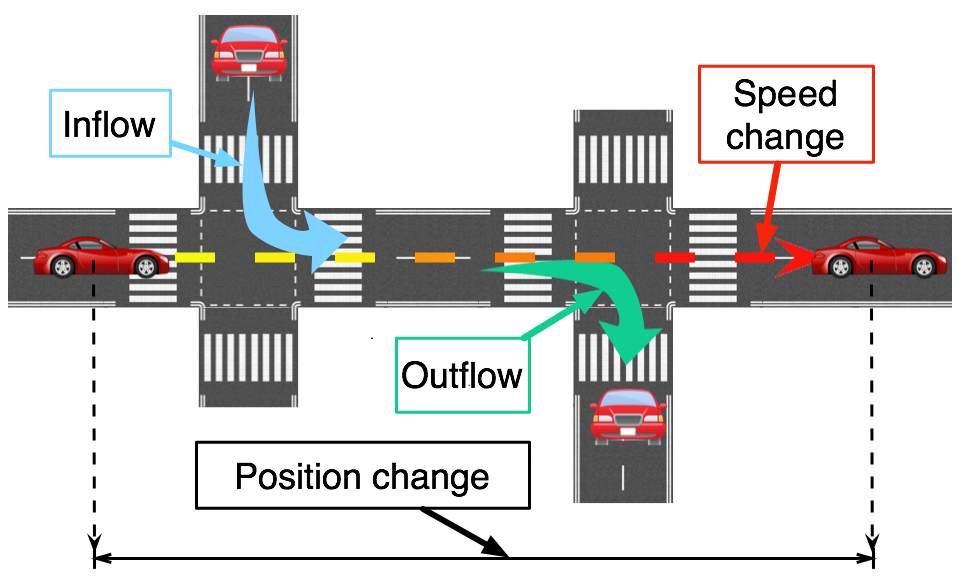}
\centering
\caption{Mobility of VANET: There are three main factors that can cause the changes of the VANET topology, namely, the inflow and the outflow of vehicles, the speed change, and the position change.} \vspace{-0mm}
\label{mobility}
\end{figure}
%%%%%%%%%%%%%%%%%%%%

In this subsection, we explore the tradeoff between the required privacy of training data at each vehicle and the security of the IDS using the classifier trained via the collaborative learning over the VANET. The privacy is quantified by the value of $\alpha_v(t)$. Basically, a larger $\alpha_v(t)$ leads to a larger likelihood ratio 
%
%$\frac{Q(f_v(t)|D_v)}{Q(f_v(t)|D'_v)}$,
$Q(f_v(t)|D_v)/Q(f_v(t)|D'_v)$,
which implies a higher belief of the adversaries about the change of any single entry of the training dataset. Therefore, larger $\alpha_v(t)$'s yields lower privacy. However, the performance of the algorithm increases when the value of $\alpha_v(t)$ grows; higher performance leads to higher level of security. This decreasing monotonicity relationship shows a tradeoff between security and privacy.

We propose an approach to determine an optimal value of $\alpha_v(t)$ that can well manage the security-privacy tradeoff by constructing the utility functions of security and privacy. The design of utility functions at every vehicle $v\in\mathcal{V}$ has to satisfy the conditions stated in the following assumption:
\begin{asu} \label{As6}
The utility of security is monotonically increasing in $\alpha_v(t)$ while the utility of privacy is monotonically decreasing in $\alpha_v(t)$.
\end{asu}

We use the empirical loss (ER), $\overline{J}(t)=\frac{C_1}{n_v}\sum_{i=1}^{n_v}\mathcal{L}_{lr}(y_{iv} f_v(t)^T x_{iv})$, to quantify the security (smaller $\overline{J}(t)$ is, higher the security is). Let $U_{sec}(\cdot):\mathds{R}_+\rightarrow\mathds{R}$ denote utility of security that describes the relationship between $\overline{J}(t)$ and $\alpha_v(t)$. The function $U_{sec}$ is determined by the experimental result, i.e., $(\alpha_v(t), \overline{J}(t))$ using curve fitting. Figure \ref{tradeoffRisk} verifies that $U_{sec}$ (curve fit in green) monotonically decreases with respect to $\alpha_v(t)$ thus the security and $\alpha_v(t)$ has a monotonic decreasing relationship. 

The utility of privacy is designed to meet specific requirements of privacy of each vehicle. Let $U_{pri}(\cdot): \mathds{R}_+\rightarrow \mathds{R}$ represent the utility of privacy. Beside the decreasing monotonicity, $U_{pri}(\cdot)$ is also required to be convex and doubly differentiable such that the optimal value of $\alpha_v(t)$ can be obtained by solving a convex optimization problem. In this experiment, we give an example of utility of privacy defined as 
$$
U_{pri}(\alpha_v(t)) = C_{v1}\cdot\ln\frac{C_{v2}}{C_{v3}\alpha_v(t)+C_{v4}\alpha^2_v(t)},$$
for $\alpha_v(t)\leq 1$, where $C_{vi}\in\mathds{R}_{++}$ for $i=1,\:2,\:3,\:4$. Then, the optimal value of $\alpha_v(t)$ is determined by solving the following problem at iteration $t$:
\begin{equation}\label{equiUtility}
\begin{aligned}
\min_{\alpha_v(t)}& \mathcal{Z}(t)=U_{sec}(\alpha_v(t))-U_{pri}(\alpha_v(t))\\
& s.t. \:\:\: 0<\alpha_v(t)\leq 1, \:\: 0 \leq U_{sec}(\alpha_v(t)) \leq U_1,
\end{aligned}
\end{equation} 
where $U_1$ is the threshold value for $U_{sec}$ beyond which is considered as insecure.

In the experiments, we use a few fixed values of $\rho$ and calculate the empirical loss $\overline{J}(t)=\frac{C_1}{n_v}\sum_{i=1}^{n_v}\mathcal{L}_{lr}(y_{iv} f_v(t)^T x_{iv})$ of the classifier. The value of $\rho$ that gives the minimum $\overline{J}$ for a fixed value of $\alpha_v(t)$ (We use $0.2$ in this experiment). The simulation of Algorithm \ref{Algorithm1}, the non-private algorithm, is used as  the control. In our experiments, we set $\rho = 10^{-9}$ and $10^{-2.5}$ for Algorithm \ref{Algorithm1} and \ref{algDVP}, respectively, and set $C_1 = 1250$ and $650$ for Algorithm \ref{Algorithm1} and \ref{algDVP}, respectively. 

Figure \ref{tradeoffRisk} shows the tradeoff between security and privacy of the DVP at the fixed number of iterations. We model the following function by curve fitting,
$ L_{acc}(\alpha_v(t)) = C_5\cdot e^{-C_6\alpha_v(t)}+ C_7, $
where $C_j\in\mathds{R}_+$, for $j = 5$, $6$, $7$. In our experiment, we determine $C_5=0.055$, $C_6=40$, $C_7=\min_{t} \{\overline{J}(t)\}$
Figure \ref{MisE} shows the receiver operating characteristic (ROC) curves of the outputs of Algorithm \ref{Algorithm1}, and  Algorithm \ref{algDVP} with different values of $\alpha_v(t)$. We can see that when $\alpha_v(t)$ increases, the ROC of the output of the DVP is close to that of the non-private Algorithm \ref{Algorithm1}. This also shows the tradeoff between security and privacy in terms of the ROC.
This feature makes it possible to find an optimal value of $\alpha_v(t)$ such that Algorithm \ref{algDVP} performs similar to Algorithm \ref{Algorithm1}.

%%%%%% Figure 6

% This Figure 6
\begin{figure*}%
\centering
\begin{subfigure}{.5\columnwidth}
\includegraphics[width=\columnwidth]{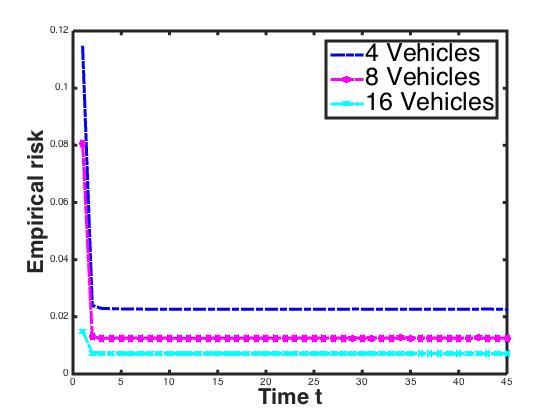}%
\caption{DVP: $\alpha_v = 0.5$.}%
\label{Vsize_c2}%
\end{subfigure}\hfill%
\begin{subfigure}{.5\columnwidth}
\includegraphics[width=\columnwidth]{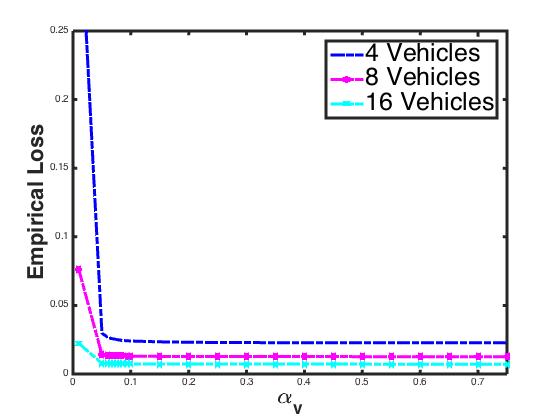}%
\caption{Tradeoff: empirical risk.}%
\label{Vsize_tradeoff}%
\end{subfigure}\hfill%
\begin{subfigure}{.5\columnwidth}
\includegraphics[width=\columnwidth]{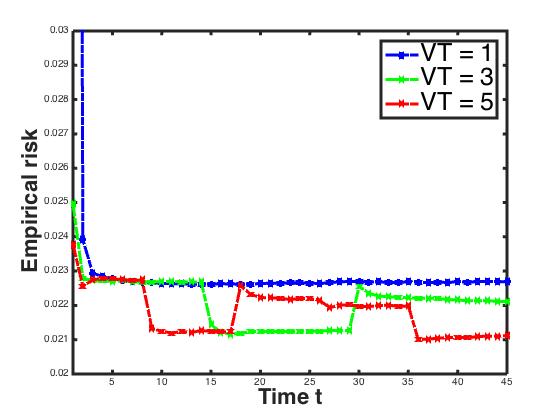}%
\caption{Convergence with $\alpha_v=0.5$.}%
\label{NMV_C}%
\end{subfigure}\hfill%
\begin{subfigure}{.5\columnwidth}
\includegraphics[width=\columnwidth]{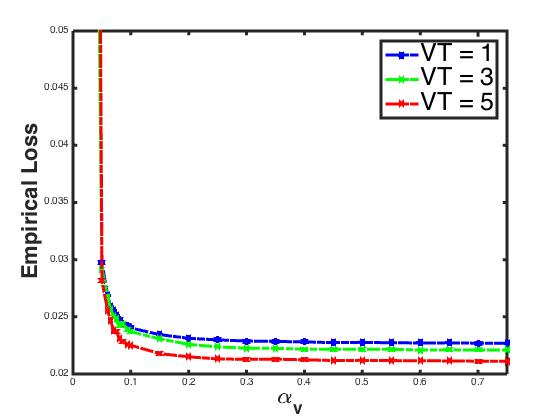}%
\caption{Tradeoff: empirical risk}%
\label{NMV_T}%
\end{subfigure}\hfill%
\caption{Figure \ref{Vsize_c2}-\ref{Vsize_tradeoff}: convergence  with different (fixed) values of $\alpha_v(t)$ with fixed $45$ ADMM iterations; DVP with $\rho = 10^{-2.5}$, $C_1=650$. Figure \ref{Vsize_tradeoff}: security-privacy tradeoff measured by empirical risk for $P=4$, $P=8$, and $P=16$ vehicles. Figure \ref{NMV_C}-\ref{NMV_T}: convergence and security-private tradeoff, respectively, with fixed $P=4$, fixed the total number of iterations as $45$, and different values of $VT$ per collaborative learning; DVP with $\rho = 10^{-2.5}$, $C_1=650$.
}
\label{figTradeoff}\vspace{-2mm}
\end{figure*}

%%%%%%%%%%%%%

%%%%%%%%%%%

% This Figure 7
\begin{figure*}%
\centering
\begin{subfigure}{.5\columnwidth}
\includegraphics[width=\columnwidth]{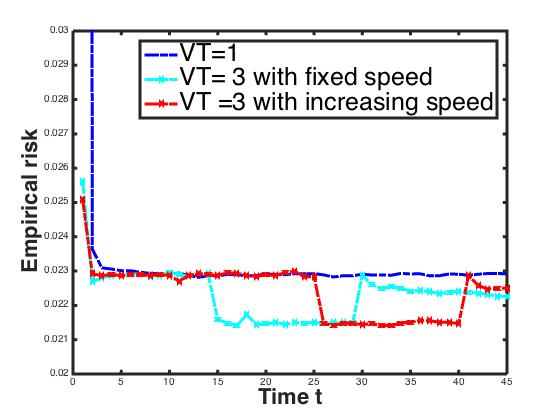}%
\caption{Convergence with $\alpha_v=0.5$}%
\label{speed_C}%
\end{subfigure}\hfill%
\begin{subfigure}{.5\columnwidth}
\includegraphics[width=\columnwidth]{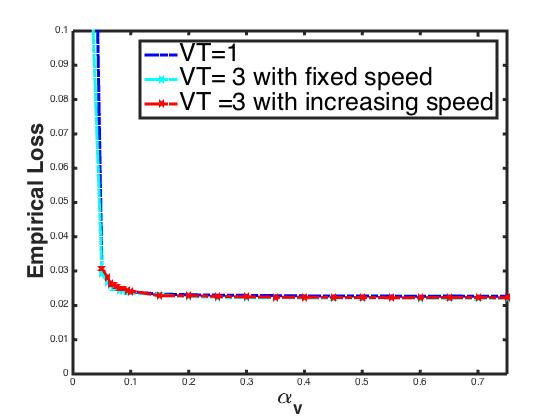}%
\caption{Tradeoff: empirical risk}%
\label{speed_T}%
\end{subfigure}\hfill%
\begin{subfigure}{.5\columnwidth}
\includegraphics[width=\columnwidth]{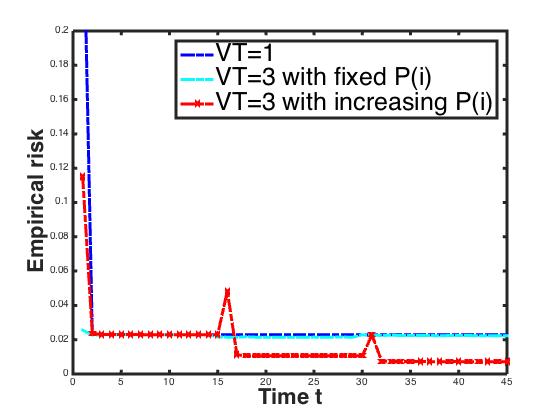}%
\caption{Convergence with $\alpha_v = 0.5$.}%
\label{density_C}%
\end{subfigure}\hfill% 
\begin{subfigure}{.5\columnwidth}
\includegraphics[width=\columnwidth]{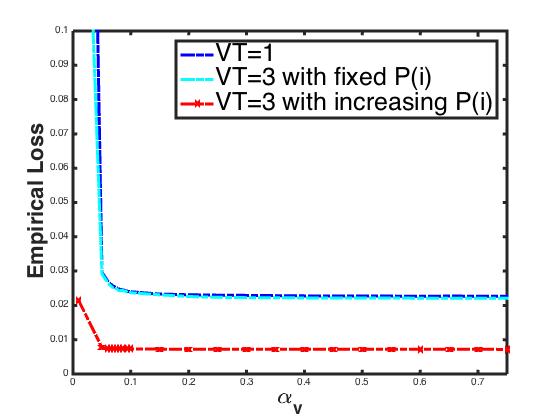}
\caption{Tradeoff: empirical risk}%
\label{density_T}%
\end{subfigure}\hfill%
\caption{Figure \ref{speed_C}-\ref{speed_T}: convergence and security-private tradeoff, respectively, with fixed $P=4$, fixed total number of iterations as $45$, fixed $VT=3$; the result of $VT=1$ is used for comparison; in the case of fixed speed, $k(1)=k(2)=k(3)=15$; in the case of increasing speed, $k(1) = 25$, $k(2) = 15$, $k(3) = 5$; DVP with $\rho = 10^{-2.5}$, $C_1=650$.
Figure \ref{density_C}-\ref{density_T}: convergence and security-private tradeoff, respectively, with a fixed total number of iterations as $45$, fixed $VT=3$, fixed $k(1)=k(2)=k(3)=15$; the result of $VT=1$ is used for comparison; in the case of fixed $P(i)=4$ for $i=1,2,3$, inflow and outflow neutralize each other; in the case of increasing $P(i)$, $P(1)=4$, $P(2)=10$, $P(3)=16$;
DVP with $\rho = 10^{-2.5}$, $C_1=650$.}
\label{figTradeoff}\vspace{-2mm}
\end{figure*}

\subsubsection{Impact of Changing Number of Vehicles in a VANET}
We also investigate how the accuracy changes when we increase the number of vehicles in a VANET. In this experiment, we fix the number of ADMM iterations to $45$ and examine three VANETs with $4$, $8$, and $16$ vehicles, respectively, using NSL-KDD dataset.

Figure \ref{Vsize_c2} shows the convergence results for different $P$. As can be seen, a larger VANET size converges to a smaller value of empirical risk (ER). Figure \ref{Vsize_tradeoff} shows the tradeoff between security and privacy in terms of the ER. We can see that larger $P$ performs better in managing the tradeoff between security and privacy.

\subsection{Topology-Varying VANET}

Due to the high mobility of the vehicles, the topology of a VANET changes frequently. In this paper, we also examine the impact of topology-varying VANET in the security and privacy. 
As shown in Figure \ref{mobility}, the changes of topology can be caused by the inflow and the outflow of the vehicles, the changes of speed, the changes of the positions, or the combinations of these. We focus on the activities of a specific vehicle $v\in\mathcal{V}$. Let $VT$ represent the number of topology changes during the one collaborative learning. Let $V(i)$ be the $i$-th topology for $i=1,...,VT$, and let $k(i)$ be the corresponding number of iterations spent at $V(i)$. Let $P(i)$ be the number of vehicles at the topology $V(i)$. 
$P(i)$ can be changed by the inflow and the outflow. $VT$ can be changed by the changes of the speed and the position. The increase and the decrease of speed also change the values of $k(i)$. 
The collaborative learning scenario at vehicle $v$ over the topology-varying VANET can be described as follows. At $V(i)$, there are $P(i)$ vehicles and  $v$ collaborates with its neighboring vehicles; after $k(i)$ iterations, the VANET topology changes to $V(i+1)$ with $P(i+1)$ vehicles; after $k(i+1)$ iterations, the VANET topology changes to $V(i+2)$, and so on.
In practice, the changes of VANET topology are very fast and complex. We simulate some varying-topology scenarios to study the impact of varying topologies at the outputs of our algorithms.
In the following two sets of experiments, we consider two cases of topology-varying VANET. In Case 1, we fix the value of $P(i)=P$ for $i=1,...,VT$, where $P$ is a constant, while in Case 2, we test the results when $P(i)$ changes for different $i$.

We conduct two experiments in Case 1. In the first experiment, we consider $VT=5$, and fix $P(i) = 4$ and $k(i) = \frac{45}{k}$, which is an integer, for all $i = 1,...., VT$, and we vary the value of $VT$. Figure \ref{NMV_C}-\ref{NMV_T} show the results of convergence and the security-privacy tradeoff for $VT=1$, $VT=3$, and $VT=5$. From Figure \ref{NMV_C}, we can see that the values of ER experience jump when the VANET topology changes. Also, the larger value of $VT$ (more changes of VANET topology) gives smaller values of ER as the number of iterations increases, which is also reflected in Figure \ref{NMV_T}. The tradeoff results shown in Figure \ref{NMV_T} indicate that a larger value of $VT$, i.e., more frequent change of VANET topology, has a better performance in managing the security-privacy tradeoff.

In the second experiment in Case 1, we test the impact on the results when the speed of vehicle changes, which is quantified by the change of $k(i)$. We fix $P(i)=4$, for all $i = 1,...., VT$, and $VT=3$. Figure \ref{speed_C} shows that there is no obvious difference in the value of ER between the fixed speed and the increasing speed. Also, Figure \ref{speed_T} does not show an obvious difference between fixed speed and varying speed regarding managing the security-privacy tradeoff.

In the experiment of Case 2, we fix $VT=3$, and $k(i)=15$, for $i=1,2,3$. We test the results when the number of vehicles $P(i)$ increases as $i$ increases. Figure \ref{density_C} shows the results of convergence. From the convergence results, we can see that a larger density of vehicles per topology has smaller values of ER; also the increasing density of vehicles per topology has smaller values of ER than the fixed-density topology-varying VANET and the fixed-topology VANET. Figure \ref{density_T} shows the behavior of the tradeoff between security and privacy. The results show that increasing $P(i)$ (more inflow of vehicles) outperforms the topology-varying with fixed $P(i)$ and the fixed-topology case.

%%%%%%%%

\section{Conclusion}\label{Se8}

In this paper, we have described an architecture for a collaborative intrusion detection system using privacy-preserving distributed machine learning. The privacy-preserving scheme for the distributed collaborative-based learning is essential for achieving a private collaboration; otherwise, the distributed machine learning itself creates privacy leakage of the training data. We have proposed a privacy-preserving machine-learning based collaborative intrusion detection system (PML-CIDS). The alternating direction method of multipliers (ADMM) approach is used to decentralize the empirical risk minimization (ERM) problem that models the collaborative learning into the distributed ERM well-suited to the nature of the VANET system.

We have proposed the dynamic differential privacy and presented dual variable perturbation (DVP) to protect the privacy of the training data by perturbing the dual variable $\lambda_v(t)$.
We have also analyzed the theoretical performance of the DVP, which is measured by the minimum training data size required to train a classifier with low error.
The tradeoff between security and privacy has been investigated through numerical experiments.
The data used in the experiments is the NSL-KDD dataset. We also have proposed a design principle to select an optimal value of the privacy parameter $\alpha_v(t)$ by solving an optimization problem such that both the security and privacy are optimized. The experiments have also studied the impact of the different VANET size, and the changing VANET topology during the collaborative learning.
As future work, we intend to investigate the collaborative IDS with both supervised and unsupervised machine learning and extend the dynamic differential privacy to different machine learning techniques. We also intend to study the methods of fast incremental learning that can be used in the frequent updates of the machine-learning-based IDSs.

\appendix
\section{Appendix}
\subsection{ Architecture of Vehicular Ad Hoc Network}\label{Apdx_1}

In this appendix, we describe these three main components of the VANET architecture \cite{al2014comprehensive, baldessari2007car}.

\subsubsection{On Board Unit (OBU)} An OBU is a communication device equipped on vehicles for vehicle-to-vehicle and vehicle-to-infrastructure communications through the dedicated short-range communication (DSRC). DSRC is based on IEEE 802.11a technology amended for the low overhead operation to 802.11p \cite{pathan2016security}. The AUs in vehicles use OBUs to communicate with other AUs in other vehicles, and the information exchange is done by OBUs over the ad hoc network. The functions and procedures of an OBU contain wireless radio access, message transfer, geographical ad hoc routing, network congestion control, data security, IP mobility support and others. The basic communication system of an OBU is composed of a minimum set of safety applications, the communication protocol stack with communication transport and network layer, the radio protocols with the IEEE 802.11p devices, and an interface to the local sensors mounted on vehicles. OBUs can also include additional network devices for non-safety applications using other radio technologies like IEEE 802.11a/b/g/n \citep{baldessari2007car}. 

\subsubsection{Application Unit (AU)}
An AU is a device mounted on the vehicle and operates the application installed that uses the communication capabilities of the OBU. Applications can be safety applications such as hazard warning, navigation with communication capabilities, or the Internet-based application such as the personal digital assistant (PDA) \citep{baldessari2007car}.

\subsubsection{Roadside Unit (RSU)}
An RSU is a device located at the fixed positions on the roadside, along highways, or at dedicated locations like parking place or gas stations. Each RSU is equipped with at least one network device for short-range wireless communications based on IEEE 802.11p radio technology. It can also be equipped with other network devices, thereby enabling communications with the infrastructure network. The main functions of an RSU include \citep{baldessari2007car}:
\begin{itemize}
\item Redistributing information to an OBU, thereby extending the communication range of the ad hoc network.
\item Running safety applications such as virtual traffic sign, vehicle-to-infrastructure warnings like accident warning.
\item Enabling the OBUs to connect to the cloud and other infrastructures.
\end{itemize}\vspace{-4mm}

\subsection{PML-CIDS Components}\label{Apdx_2}

In this appendix, we will elaborate on each component of the PML-CIDS architecture.
\subsubsection{Pre-processing Engine}

The pre-processing engine collects and pre-processes real-time audit data flows from OBU and various applications in the AU. These data flows may include user and system activities in the AU and the access request behaviors from outside of the vehicle; it also can be the communication activities between different OBUs or between OBU and RSUs. The pre-processing includes the transformation of symbolic attributes to numerical values, features selection, and data normalization to reduce the possible large variation between values.

\subsubsection{Local Detection Engine}

The local detection engine analyzes the audit data flows processed by the pre-processing engine by a classifier trained via the P-CML engine. If a specific activity is classified as an intrusion, then the local detection engine triggers the alarm.
The user of the vehicle determines how often the classifier needs to be re-trained. If an update of the classifier is required, the local detection engine will initiate the P-CML engine.

\subsubsection{P-CML Engine}

When the P-CML engine is initiated, the ADMM-based private distributed machine learning operates over a temporarily established VANET by collaborating with other vehicles and RSUs. Each vehicle stores its local labeled training dataset. The training dataset can be the historical intrusion activities that have been detected in each vehicle, the data collected by putting sensors on the VANET (for example, getting TCP or Netflow \cite{buczak2015survey}), or the data provided by the trustworthy parties like the Department of Transportation. The participation of RSUs in P-CML enables vehicles to connect to vehicles in distance and other infrastructures such as the cloud.

The P-CML engine is composed of three components, namely, the collaborative communication (CC) engine, the distributed local learning (DLL), and the privacy-preserving (PP) mechanism. The DLL engine updates its local ADMM variables including the dual and the primal variables using its local training dataset and the primal variables transmitted from neighboring vehicles at each ADMM iteration.
The PP engine at each vehicle provides dynamic differential privacy to the VANET involved in the collaboration.
At each ADMM iteration, the CC engine uses the OBU to exchange the intermediately updated parameters with the CC engines of the neighboring vehicles through the low-latency communication \cite{hartenstein2008tutorial} based on the DSRC.

%
%
%

%%%%%%%%%%%%%%%%%%%%%%%%%%%%%%%%%%%
%%%%%%%% CONVERGENCE %%%%%%%%%%%%%%
%%%%%%%%%%%%%%%%%%%%%%%%%%%%%%%%%%%
%Another important performance measure is the convergence of the algorithm. We also investigate the convergence of Algorithm \ref{algDVP} based on the assumption that all the conditions shown in Theorem \ref{theorem3} are satisfied.

%%%%%%%%%%%%%%%%%%%%%%%%%%

\renewcommand\refname{Reference}
\bibliographystyle{ieeetr} 
 \bibliography{VANET_TISPN} 

\begin{thebibliography}{10}

\bibitem{pathan2016security}
A.-S.~K. Pathan, {\em Security of self-organizing networks: MANET, WSN, WMN,
  VANET}.
\newblock CRC press, 2016.

\bibitem{zhang2003secure}
W.~Zhang, R.~Rao, G.~Cao, and G.~Kesidis, ``Secure routing in ad hoc networks
  and a related intrusion detection problem,'' in {\em Military Communications
  Conference, 2003. MILCOM'03. 2003 IEEE}, vol.~2, pp.~735--740, IEEE, 2003.

\bibitem{anantvalee2007survey}
T.~Anantvalee and J.~Wu, ``A survey on intrusion detection in mobile ad hoc
  networks,'' in {\em Wireless Network Security}, pp.~159--180, Springer, 2007.

\bibitem{zhu2012guidex}
Q.~Zhu, C.~Fung, R.~Boutaba, and T.~Basar, ``Guidex: A game-theoretic
  incentive-based mechanism for intrusion detection networks,'' {\em IEEE
  Journal on Selected Areas in Communications}, vol.~30, no.~11,
  pp.~2220--2230, 2012.

\bibitem{fung2010bayesian}
C.~J. Fung, Q.~Zhu, R.~Boutaba, and T.~Ba{\c{s}}ar, ``Bayesian decision
  aggregation in collaborative intrusion detection networks,'' in {\em Network
  Operations and Management Symposium (NOMS), 2010 IEEE}, pp.~349--356, IEEE,
  2010.

\bibitem{raiyn2014survey}
J.~Raiyn {\em et~al.}, ``A survey of cyber attack detection strategies,'' {\em
  International Journal of Security and Its Applications}, vol.~8, no.~1,
  pp.~247--256, 2014.

\bibitem{hoque2014network}
N.~Hoque, M.~H. Bhuyan, R.~C. Baishya, D.~K. Bhattacharyya, and J.~K. Kalita,
  ``Network attacks: Taxonomy, tools and systems,'' {\em Journal of Network and
  Computer Applications}, vol.~40, pp.~307--324, 2014.

\bibitem{boyd2011distributed}
S.~Boyd, N.~Parikh, E.~Chu, B.~Peleato, J.~Eckstein, {\em et~al.},
  ``Distributed optimization and statistical learning via the alternating
  direction method of multipliers,'' {\em Foundations and
  Trends{\textregistered} in Machine Learning}, vol.~3, no.~1, pp.~1--122,
  2011.

\bibitem{dwork2006calibrating}
C.~Dwork, F.~McSherry, K.~Nissim, and A.~Smith, ``Calibrating noise to
  sensitivity in private data analysis,'' in {\em Theory of cryptography},
  pp.~265--284, Springer, 2006.

\bibitem{zhang2003intrusion}
Y.~Zhang, W.~Lee, and Y.-A. Huang, ``Intrusion detection techniques for mobile
  wireless networks,'' {\em Wireless Networks}, vol.~9, no.~5, pp.~545--556,
  2003.

\bibitem{albers2002security}
P.~Albers, O.~Camp, J.-M. Percher, B.~Jouga, L.~Me, and R.~S. Puttini,
  ``Security in ad hoc networks: a general intrusion detection architecture
  enhancing trust based approaches.,'' in {\em Wireless Information Systems},
  pp.~1--12, 2002.

\bibitem{sterne2005general}
D.~Sterne, P.~Balasubramanyam, D.~Carman, B.~Wilson, R.~Talpade, C.~Ko,
  R.~Balupari, C.-Y. Tseng, and T.~Bowen, ``A general cooperative intrusion
  detection architecture for manets,'' in {\em Third IEEE International
  Workshop on Information Assurance (IWIA'05)}, pp.~57--70, IEEE, 2005.

\bibitem{kachirski2003effective}
O.~Kachirski and R.~Guha, ``Effective intrusion detection using multiple
  sensors in wireless ad hoc networks,'' in {\em System Sciences, 2003.
  Proceedings of the 36th Annual Hawaii International Conference on},
  pp.~8--pp, IEEE, 2003.

\bibitem{blowers2014machine}
M.~Blowers and J.~Williams, ``Machine learning applied to cyber operations,''
  in {\em Network Science and Cybersecurity}, pp.~155--175, Springer, 2014.

\bibitem{horng2011novel}
S.-J. Horng, M.-Y. Su, Y.-H. Chen, T.-W. Kao, R.-J. Chen, J.-L. Lai, and C.~D.
  Perkasa, ``A novel intrusion detection system based on hierarchical
  clustering and support vector machines,'' {\em Expert systems with
  Applications}, vol.~38, no.~1, pp.~306--313, 2011.

\bibitem{muda2011intrusion}
Z.~Muda, W.~Yassin, M.~Sulaiman, and N.~Udzir, ``Intrusion detection based on
  k-means clustering and na{\"\i}ve bayes classification,'' in {\em Information
  Technology in Asia (CITA 11), 2011 7th International Conference on},
  pp.~1--6, IEEE, 2011.

\bibitem{buczak2015survey}
A.~L. Buczak and E.~Guven, ``A survey of data mining and machine learning
  methods for cyber security intrusion detection,'' {\em IEEE Communications
  Surveys \& Tutorials}, vol.~18, no.~2, pp.~1153--1176, 2015.

\bibitem{wagner2011machine}
C.~Wagner, J.~Fran{\c{c}}ois, T.~Engel, {\em et~al.}, ``Machine learning
  approach for ip-flow record anomaly detection,'' in {\em International
  Conference on Research in Networking}, pp.~28--39, Springer, 2011.

\bibitem{kruegel2003using}
C.~Kruegel and T.~Toth, ``Using decision trees to improve signature-based
  intrusion detection,'' in {\em International Workshop on Recent Advances in
  Intrusion Detection}, pp.~173--191, Springer, 2003.

\bibitem{bilge2011exposure}
L.~Bilge, E.~Kirda, C.~Kruegel, and M.~Balduzzi, ``Exposure: Finding malicious
  domains using passive dns analysis.,'' in {\em NDSS}, 2011.

\bibitem{bilge2014exposure}
L.~Bilge, S.~Sen, D.~Balzarotti, E.~Kirda, and C.~Kruegel, ``Exposure: a
  passive dns analysis service to detect and report malicious domains,'' {\em
  ACM Transactions on Information and System Security (TISSEC)}, vol.~16,
  no.~4, p.~14, 2014.

\bibitem{cannady1998artificial}
J.~Cannady, ``Artificial neural networks for misuse detection,'' in {\em
  National information systems security conference}, pp.~368--81, 1998.

\bibitem{lippmann2000improving}
R.~P. Lippmann and R.~K. Cunningham, ``Improving intrusion detection
  performance using keyword selection and neural networks,'' {\em Computer
  Networks}, vol.~34, no.~4, pp.~597--603, 2000.

\bibitem{fung2016facid}
C.~J. Fung and Q.~Zhu, ``Facid: A trust-based collaborative decision framework
  for intrusion detection networks,'' {\em Ad Hoc Networks}, vol.~53,
  pp.~17--31, 2016.

\bibitem{zhu2010distributed}
Q.~Zhu, C.~J. Fung, R.~Boutaba, and T.~Basar, ``A distributed sequential
  algorithm for collaborative intrusion detection networks,'' in {\em
  Communications (ICC), 2010 IEEE International Conference on}, pp.~1--6, IEEE,
  2010.

\bibitem{zhu2009dynamic}
Q.~Zhu and T.~Ba{\c{s}}ar, ``Dynamic policy-based ids configuration,'' in {\em
  Decision and Control, 2009 held jointly with the 2009 28th Chinese Control
  Conference. CDC/CCC 2009. Proceedings of the 48th IEEE Conference on},
  pp.~8600--8605, IEEE, 2009.

\bibitem{zhu2011indices}
Q.~Zhu and T.~Ba{\c{s}}ar, ``Indices of power in optimal ids default
  configuration: Theory and examples,'' in {\em International Conference on
  Decision and Game Theory for Security}, pp.~7--21, Springer, 2011.

\bibitem{mejri2016new}
M.~N. Mejri, N.~Achir, and M.~Hamdi, ``A new security games based reaction
  algorithm against dos attacks in vanets,'' in {\em Consumer Communications \&
  Networking Conference (CCNC), 2016 13th IEEE Annual}, pp.~837--840, IEEE,
  2016.

\bibitem{kasiviswanathan2011can}
S.~P. Kasiviswanathan, H.~K. Lee, K.~Nissim, S.~Raskhodnikova, and A.~Smith,
  ``What can we learn privately?,'' {\em SIAM Journal on Computing}, vol.~40,
  no.~3, pp.~793--826, 2011.

\bibitem{bassily2014private}
R.~Bassily, A.~Smith, and A.~Thakurta, ``Private empirical risk minimization:
  Efficient algorithms and tight error bounds,'' pp.~464--473, 2014.

\bibitem{han2014differentially}
S.~Han, U.~Topcu, and G.~J. Pappas, ``Differentially private distributed
  constrained optimization,'' {\em IEEE Transactions on Automatic Control},
  pp.~1--1, 2014.

\bibitem{zhang2017dynamic}
T.~Zhang and Q.~Zhu, ``Dynamic differential privacy for admm-based distributed
  classification learning,'' {\em IEEE Transactions on Information Forensics
  and Security}, vol.~12, no.~1, pp.~172--187, 2017.

\bibitem{zhang2016dual}
T.~Zhang and Q.~Zhu, ``A dual perturbation approach for differential private
  admm-based distributed empirical risk minimization,'' in {\em Proceedings of
  the 2016 ACM Workshop on Artificial Intelligence and Security}, pp.~129--137,
  ACM, 2016.

\bibitem{mcsherry2007mechanism}
F.~McSherry and K.~Talwar, ``Mechanism design via differential privacy,'' in
  {\em Foundations of Computer Science, 2007. FOCS'07. 48th Annual IEEE
  Symposium on}, pp.~94--103, IEEE, 2007.

\bibitem{blum2005practical}
A.~Blum, C.~Dwork, F.~McSherry, and K.~Nissim, ``Practical privacy: the sulq
  framework,'' in {\em Proceedings of the twenty-fourth ACM
  SIGMOD-SIGACT-SIGART symposium on Principles of database systems},
  pp.~128--138, ACM, 2005.

\bibitem{eigner2013differential}
F.~Eigner and M.~Maffei, ``Differential privacy by typing in security
  protocols,'' in {\em Computer Security Foundations Symposium (CSF), 2013 IEEE
  26th}, pp.~272--286, IEEE, 2013.

\bibitem{Han2016Differentially}
S.~Han, U.~Topcu, and G.~J. Pappas, ``Differentially private distributed
  constrained optimization,'' {\em IEEE Transactions on Automatic Control},
  vol.~62, no.~1, pp.~50--64, 2016.

\bibitem{Hale2015Differentially}
M.~T. Hale and M.~Egerstedty, ``Differentially private cloud-based multi-agent
  optimization with constraints,'' in {\em American Control Conference},
  pp.~1235--1240, 2015.

\bibitem{forero2010consensus}
P.~A. Forero, A.~Cano, and G.~B. Giannakis, ``Consensus-based distributed
  support vector machines,'' {\em The Journal of Machine Learning Research},
  vol.~11, pp.~1663--1707, 2010.

\bibitem{dai2013perfect}
Y.-H. Dai, ``A perfect example for the bfgs method,'' {\em Mathematical
  Programming}, vol.~138, no.~1-2, pp.~501--530, 2013.

\bibitem{DBLP:journals/corr/ZhangZ16}
T.~Zhang and Q.~Zhu, ``Dynamic privacy for distributed machine learning over
  network,'' {\em CoRR}, vol.~abs/1601.03466, 2016.

\bibitem{shalev2008svm}
S.~Shalev-Shwartz and N.~Srebro, ``Svm optimization: inverse dependence on
  training set size,'' in {\em Proceedings of the 25th international conference
  on Machine learning}, pp.~928--935, ACM, 2008.

\bibitem{chaudhuri2011differentially}
K.~Chaudhuri, C.~Monteleoni, and A.~D. Sarwate, ``Differentially private
  empirical risk minimization,'' {\em The Journal of Machine Learning
  Research}, vol.~12, pp.~1069--1109, 2011.

\bibitem{tavallaee2009detailed}
M.~Tavallaee, E.~Bagheri, W.~Lu, and A.-A. Ghorbani, ``A detailed analysis of
  the kdd cup 99 data set,'' in {\em Proceedings of the Second IEEE Symposium
  on Computational Intelligence for Security and Defence Applications 2009},
  2009.

\bibitem{dhanabal2015study}
L.~Dhanabal and D.~S. Shantharajah, ``A study on nsl-kdd dataset for intrusion
  detection system based on classification algorithms,'' {\em International
  Journal of Advanced Research in Computer and Communication Engineering},
  vol.~4, no.~6, 2015.

\bibitem{mohamad2015hybrid}
H.~Mohamad~Tahir, W.~Hasan, A.~Md~Said, N.~H. Zakaria, N.~Katuk, N.~F. Kabir,
  M.~H. Omar, O.~Ghazali, and N.~I. Yahya, ``Hybrid machine learning technique
  for intrusion detection system,'' 5th International Conference on Computing
  and Informatics (ICOCI) 2015, 2015.

\bibitem{al2014comprehensive}
S.~Al-Sultan, M.~M. Al-Doori, A.~H. Al-Bayatti, and H.~Zedan, ``A comprehensive
  survey on vehicular ad hoc network,'' {\em Journal of network and computer
  applications}, vol.~37, pp.~380--392, 2014.

\bibitem{baldessari2007car}
R.~Baldessari, B.~B{\"o}dekker, M.~Deegener, A.~Festag, W.~Franz, C.~C. Kellum,
  T.~Kosch, A.~Kovacs, M.~Lenardi, C.~Menig, {\em et~al.}, ``Car-2-car
  communication consortium-manifesto,'' 2007.

\bibitem{hartenstein2008tutorial}
H.~Hartenstein and L.~Laberteaux, ``A tutorial survey on vehicular ad hoc
  networks,'' {\em IEEE Communications magazine}, vol.~46, no.~6, 2008.

\end{thebibliography}

% \bibliographystyle{unsrt}  
% %\bibliography{references}  %%% Remove comment to use the external .bib file (using bibtex).
% %%% and comment out the ``thebibliography'' section.

% %%% Comment out this section when you \bibliography{references} is enabled.
% \begin{thebibliography}{1}

% \bibitem{kour2014real}
% George Kour and Raid Saabne.
% \newblock Real-time segmentation of on-line handwritten arabic script.
% \newblock In {\em Frontiers in Handwriting Recognition (ICFHR), 2014 14th
%   International Conference on}, pages 417--422. IEEE, 2014.

% \bibitem{kour2014fast}
% George Kour and Raid Saabne.
% \newblock Fast classification of handwritten on-line arabic characters.
% \newblock In {\em Soft Computing and Pattern Recognition (SoCPaR), 2014 6th
%   International Conference of}, pages 312--318. IEEE, 2014.

% \bibitem{hadash2018estimate}
% Guy Hadash, Einat Kermany, Boaz Carmeli, Ofer Lavi, George Kour, and Alon
%   Jacovi.
% \newblock Estimate and replace: A novel approach to integrating deep neural
%   networks with existing applications.
% \newblock {\em arXiv preprint arXiv:1804.09028}, 2018.

% \end{thebibliography}

\end{document}